\algnewcommand{\InlineIf}[2]{ 
  \State \algorithmicif\ #1\ \algorithmicthen\ #2}
\algnewcommand{\InlineIfElse}[3]{ 
  \State \algorithmicif\ #1\ \algorithmicthen\ #2\ \algorithmicelse\ #3}
\algnewcommand{\InlineFor}[2]{\algorithmicfor\ #1\ \algorithmicdo\ #2} 
\DeclareMathOperator*{\argmin}{arg\,min}
\newcommand{\bigO}[1]{O(#1)} 
\newcommand{\softO}[1]{\mathchoice{\tilde{O}\left(#1\right)}{O\tilde{~}(#1)}{O\tilde{~}(#1)}{O\tilde{~}(#1)}} 
\newcommand{\expmm}{\omega} 
\newcommand{\maxsz}{M} 
\newcommand{\comp}{\mathcal{C}} 
\newcommand{\algoname}[1]{{\normalfont\textsc{#1}}} 
\newcommand{\assign}{\leftarrow} 
\newcommand{\ZZp}{\mathbb{Z}_{> 0}} 
\newcommand{\field}{\mathbb{K}} 
\newcommand{\nvar}{r}  
\newcommand{\ring}{\mathcal{R}} 
\newcommand{\polMod}[1]{\ring^{#1}} 
\newcommand{\matRing}[2]{\field^{#1 \times #2}} 
\newcommand{\pmatRing}[2]{\ring^{#1 \times #2}} 
\newcommand{\ideal}[1][I]{\mathcal{#1}} 
\newcommand{\module}[1][M]{\mathcal{#1}} 
\newcommand{\nodule}{\module[N]} 
\newcommand{\monom}[1]{\mathrm{Mon}(#1)} 
\newcommand{\genBy}[1]{\langle #1 \rangle} 
\newcommand{\col}[1]{\boldsymbol{#1}} 
\newcommand{\mat}[1]{\boldsymbol{#1}} 
\newcommand{\xcol}{\col{X}} 
\newcommand{\trsp}[1]{#1^\mathsf{T}} 
\newcommand{\evec}[1]{\boldsymbol{e}_{#1}} 
\newcommand{\ident}[1]{\mat{I}_{#1}} 
\newcommand{\matz}{\mat{0}}  
\newcommand{\ord}{\preccurlyeq} 
\newcommand{\ordneq}{\prec} 
\newcommand{\sord}[1]{\ord_{#1}} 
\newcommand{\ordLex}{\ord_{\mathrm{lex}} } 
\newcommand{\ordTOPLex}{\ord^{\mathrm{top}}_{\mathrm{lex}} } 
\newcommand{\gb}{\mat{P}} 
\newcommand{\egb}{\mat{E}} 
\newcommand{\gbb}{\mat{Q}} 
\newcommand{\lm}[2]{\mathrm{lm}_{#1}(#2)} 
\newcommand{\gbdim}{k} 
\newcommand{\gbbdim}{\ell} 
\newcommand{\p}{\boldsymbol{p}} 
\newcommand{\pp}{p} 
\newcommand{\q}{\boldsymbol{q}} 
\newcommand{\qq}{q} 
\newcommand{\f}{\boldsymbol{f}} 
\newcommand{\ff}{f} 
\newcommand{\g}{\boldsymbol{g}} 
\newcommand{\h}{\boldsymbol{h}} 
\newcommand{\mmu}{\boldsymbol{\mu}} 
\newcommand{\F}{\mat{F}} 
\renewcommand{\G}{\mat{G}} 
\renewcommand{\H}{\mat{H}} 
\newcommand{\LM}{\mat{L}} 
\newcommand{\LMinp}{\mat{K}} 
\newcommand{\pt}{\mat{\alpha}} 
\newcommand{\rdim}{m} 
\newcommand{\edim}{n} 
\newcommand{\syzmod}[2]{\mathrm{Syz}_{{#1}}(#2)} 
\newcommand{\vsdim}{D} 
\newcommand{\lf}{\varphi} 
\newcommand{\val}{\upsilon} 
\newcommand{\pivot}{\pi} 
\newcommand{\prc}{d} 
\begin{document}

\title{A Divide-and-conquer Algorithm for Computing Gr\"obner Bases of Syzygies in Finite Dimension}

\author{Simone Naldi}
\affiliation{%
  \institution{{\normalsize Univ. Limoges, CNRS, XLIM, UMR\,7252}}
  \city{F-87000 Limoges}
  \state{France}
}

\author{Vincent Neiger}
\affiliation{%
  \institution{{\normalsize Univ. Limoges, CNRS, XLIM, UMR\,7252}}
  \city{F-87000 Limoges}
  \state{France}
}

\begin{abstract}
  Let \(\f_1, \ldots, \f_\rdim\) be elements in a quotient \(\polMod{\edim} /
  \nodule\) which has finite dimension as a \(\field\)-vector space, where
  \(\ring = \field[X_1,\ldots,X_\nvar]\) and \(\nodule\) is an
  \(\ring\)-submodule of \(\polMod{\edim}\). We address the problem of
  computing a Gr\"obner basis of the module of syzygies of \((\f_{\!1}, \ldots,
  \f_{\!\rdim})\), that is, of vectors \((\pp_1, \ldots, \pp_\rdim) \in
  \polMod{\rdim}\) such that \(\pp_1\f_{\!1}+\cdots+\pp_\rdim\f_{\!\rdim} = 0\).

  An iterative algorithm for this problem was given by Marinari, M\"oller, and
  Mora (1993) using a dual representation of \(\polMod{\edim}/\nodule\) as the
  kernel of a collection of linear functionals. Following this viewpoint, we
  design a divide-and-conquer algorithm, which can be interpreted as a
  generalization to several variables of Beckermann and Labahn's recursive
  approach for matrix Pad\'e and rational interpolation problems. To highlight
  the interest of this method, we focus on the specific case of bivariate
  Pad\'e approximation and show that it improves upon the best known complexity
  bounds.
\end{abstract}

\keywords{Syzygies; Gr\"obner basis; Pad\'e approximation; divide and conquer}

\maketitle

\section{Introduction}
\label{sec:intro}

\paragraph{Context.}

Hereafter, \(\ring = \field[X_1,\ldots,X_\nvar]\) is the ring of
\(\nvar\)-variate polynomials over a field \(\field\). Given an
\(\ring\)-submodule \(\nodule\subset \polMod{\edim}\) such that
\(\polMod{\edim}/\nodule\) has finite dimension \(\vsdim\) as a
\(\field\)-vector space, as well as a matrix \(\F \in
\pmatRing{\rdim}{\edim}\) with rows \(\f_{\!1},\ldots,\f_{\!\rdim} \in
\polMod{\edim}\), this paper studies the computation of a Gr\"obner basis of
the module of syzygies
\begin{align*}
  \syzmod{\nodule}{\F} = \{\p = (\pp_i)_{1\le i\le \rdim} \in \ring^\rdim \mid \p \F = \textstyle\sum_{1\le i\le\rdim}\pp_i \f_{\!i} \in \nodule\},
\end{align*}
where \(\p\) is seen as a \(1 \times \rdim\) row vector. Note that
\(\polMod{\rdim}/\syzmod{\nodule}{\F}\) also has finite dimension, at most
\(\vsdim\), as a \(\field\)-vector space.

Following a path of work pioneered by Marinari, M\"oller and Mora
\cite{MaMoMo93,AlonsoMarinariMora2003,Mora09}, we focus on a specific situation
where \(\nodule\) is described using duality. That is, \(\nodule\) is known
through \(\vsdim\) linear functionals \(\lf_j : \polMod{\edim} \to \field\)
such that \(\nodule = \cap_{1 \le j \le \vsdim} \ker(\lf_j)\). In this context,
it is customary to make an assumption equivalent to the following: \(\nodule_i
= \cap_{1 \le j \le i} \ker(\lf_i)\) is an \(\ring\)-module, for \(1 \le i \le
\vsdim\); see e.g.~\cite[Algo.\,2]{MaMoMo93} \cite[Eqn.\,(4.1)]{Fitzpatrick97}
\cite[Eqn.\,(5)]{OKeeFit02} for such assumptions and related algorithms.
Namely, this assumption allows one to design iterative algorithms which compute
bases of \(\syzmod{\nodule_i}{\F}\) iteratively for increasing \(i\), until
reaching \(i=\vsdim\) and obtaining the sought basis of
\(\syzmod{\nodule}{\F}\). An efficient such iterative procedure is given in
\cite{MaMoMo93}, specifically in Algorithm 2 (variant in Section 9 therein);
note that it is written for \(\rdim=\edim=1\) and \(\F=[1]\), in which case
\(\syzmod{\nodule_i}{\F} = \nodule_i\), but directly extends to the case
\(\rdim\ge 1\) and \(\F \in \pmatRing{\rdim}{\edim}\).

\paragraph{Ideal of points and Pad\'e approximation.}

One particular case of interest is when \(\nodule\) is the vanishing ideal of a
given set of points: \(\edim=1\), and \(\nodule\) is the ideal of all
polynomials in \(\ring\) which vanish at distinct points
\(\pt_1,\ldots,\pt_\vsdim \in \field^\nvar\). Here, one takes the linear
functionals for evaluation: \(\lf_j : f \in \ring \mapsto f(\pt_j) \in
\field\). The question is, given the points, \(\rdim\) polynomials as \(\F \in
\pmatRing{\rdim}{1}\), and a monomial order \(\ord\), to compute a
\(\ord\)-Gr\"obner basis of the set of vectors \(\p\) such that \(\p\F\)
vanishes at all the points. When \(\rdim=1\) and \(\F=[1]\), this means
computing a \(\ord\)-Gr\"obner basis of the ideal of the points, as studied
in \cite{MolBuc82,MaMoMo93}.

Another case is that of (multivariate) Pad\'e approximation and its extensions,
as studied in \cite{FitFly92,Fitzpatrick97,OKeeFit02,farr2006computing}, as
well as in \cite{BerthomieuFaugere2018} in the context of the computation of
multidimensional linear recurrence relations. The basic setting is for
\(\edim=1\), with \(\nodule\) an ideal of the form
\(\genBy{X_1^{\prc_1},\ldots,X_\nvar^{\prc_\nvar}}\), and \(\F =
[\begin{smallmatrix} f \\ -1\end{smallmatrix}]\) for some given \(f \in
\ring\). Then, elements of \(\syzmod{\nodule}{\F}\) are vectors \((q,p) \in
\ring^2\) such that \(f = p/q \bmod X_1^{\prc_1},\ldots,X_\nvar^{\prc_\nvar}\).
Here, the \(\vsdim=\prc_1\cdots\prc_\nvar\) linear functionals correspond to
the coefficients of multidegree less than \((\prc_1,\ldots,\prc_\nvar)\); note
that not all orderings of these functionals satisfy the assumption above.

For these two situations, as well as some extensions of them, the fastest known
algorithms rely on linear algebra and have a cost bound of
\(\bigO{\rdim\vsdim^2 + \nvar \vsdim^3}\) operations in \(\field\)
\cite{MaMoMo93,Fitzpatrick97}; this was recently improved in
\cite[Thm.\,2.13]{Neiger16} and \cite{NeigerSchost19} to
\(\bigO{\rdim\vsdim^{\expmm-1} + \nvar \vsdim^\expmm\log(\vsdim)}\) where
\(\expmm < 2.38\) is the exponent of matrix multiplication
\cite{CopWin90,LeGall14}.

Based on work in \cite{CerliencoMureddu1995,FelszeghyRathRonyai2006}, in the
specific case of an ideal of points \(\nodule\) and the lexicographic order,
Ceria and Mora gave a combinatorial algorithm to compute the
\(\ordLex\)-monomial basis of \(\ring/\nodule\), the Cerlienco-Mureddu
correspondence, and squarefree separators for the points using \(\bigO{\nvar
\vsdim^2 \log(\vsdim)}\) operations \cite{ceria2018combinatorics}.

\paragraph{The univariate case.}

This problem has received attention in the case of a single variable
(\(\nvar=1\)) notably thanks to the numerous applications of matrix rational
interpolation and Hermite-Pad\'e approximation, which are the two situations
described above. Iterative algorithms were first given for Pad\'e approximation
in \cite{Wynn1960,Geddes73} and then for Hermite-Pad\'e approximation in
\cite{Beckermann92,BarBul92,BecLab97}; the latter can be seen as univariate
analogues of \cite[Algo.\,2]{MaMoMo93} and \cite[Algo.\,4.7]{Fitzpatrick97}.

A breakthrough divide and conquer approach was designed by Beckermann and
Labahn in \cite[Algo.\,SPHPS]{BecLab94}, allowing one to take advantage of
univariate polynomial matrix multiplication while previous iterative algorithms
only relied on naive linear algebra operations. This led to a line of work
\cite{GiJeVi03,Storjohann06,ZhoLab12,JeNeScVi16,JeNeScVi17} which consistently
improved the incorporation of fast linear algebra and fast polynomial
multiplication in this divide and conquer framework, culminating in cost bounds
for rational interpolation and Hermite-Pad\'e approximation which are close
asymptotically to the size of the problem (if \(\expmm=2\), these cost bounds
are quasi-linear in the size of the input). To the best of our knowledge, no
similar divide and conquer technique has been developed in multivariate
settings prior to this work. 

\paragraph{Contribution.}

We propose a divide and conquer algorithm for the problem of computing a
\(\ord\)-Gr\"obner basis of \(\syzmod{\nodule}{\F}\) in the multivariate case.
This is based on the iterative algorithm \cite[Algo.\,2]{MaMoMo93}, observing
that each step of the iteration can be interpreted as a left multiplication by
a matrix which has a specific shape, which we call elementary Gr\"obner basis
(see \cref{sec:basecase}). The new algorithm reorganizes these matrix products
through a divide and conquer strategy, and thus groups several products by
elementary Gr\"obner bases into a single multivariate polynomial matrix
multiplication.

Thus, both the existing iterative and the new divide and conquer approaches
compute the same elementary Gr\"obner bases, but unlike the former, our
algorithm does not explicitly compute Gr\"obner bases for all intermediate
syzygy modules \(\syzmod{\nodule_i}{\F}\). By computing less, we expect to
achieve better computational complexity. To illustrate this, we specialize our
approach to multivariate matrix Pad\'e approximation and derive complexity
bounds for this case; we obtain the next result, which is a particular case of
\cref{prop:bivariate_pade}.

\begin{theorem}
  \label{thm:bivariate_pade}
  For \(\ring = \field[X,Y]\), let \(\ff_1,\ldots,\ff_\rdim \in \ring\), and
  let \(\ord\) be a monomial order on \(\ring\). Then one can compute a minimal
  \(\ord\)-Gr\"obner basis of the module of Hermite-Pad\'e approximants
  \[
    \{(\pp_1,\ldots,\pp_\rdim) \in \polMod{\rdim} \mid \pp_1 \ff_1 + \cdots + \pp_\rdim \ff_\rdim = 0 \bmod \genBy{X^d,Y^d}\}
  \]
  using \(\softO{\rdim^\expmm d^{\expmm+2}}\) operations in \(\field\), where
  \(\softO{\cdot}\) means that polylogarithmic factors are omitted.
\end{theorem}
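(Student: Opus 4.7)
The plan is to instantiate the divide-and-conquer algorithm developed in the earlier sections of the paper at the bivariate Pad\'e setting: take $\nvar = 2$, $\edim = 1$, $\nodule = \genBy{X^d, Y^d}$, and $\F = \trsp{[\ff_1, \ldots, \ff_\rdim]} \in \pmatRing{\rdim}{1}$, so that $\syzmod{\nodule}{\F}$ is exactly the module of Hermite-Pad\'e approximants in the statement. Here $\vsdim = \dim_\field(\ring/\nodule) = d^2$.

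First I would fix an ordering of the $d^2$ coefficient-extraction functionals $\lf_j : f \mapsto [X^{a_j} Y^{b_j}] f$ satisfying the ascending-chain hypothesis on the $\nodule_i = \bigcap_{j \le i}\ker\lf_j$. Listing the $d^2$ monomials $X^a Y^b$ with $0 \le a, b < d$ in a degree-refining term order makes each $\nodule_i$ a monomial ideal of $\ring$, so the algorithm from earlier is directly applicable and returns a minimal $\ord$-Gr\"obner basis of $\syzmod{\nodule}{\F}$.

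Next I would trace the arithmetic cost along the recursion, whose tree has depth $\log_2 \vsdim = O(\log d)$. At each internal node the algorithm multiplies two recursively computed elementary Gr\"obner basis matrices, which are bivariate polynomial matrices in $\pmatRing{\rdim}{\rdim}$; this bivariate polynomial matrix multiplication is the dominant work per node. To turn it into an operation for which we have fast algorithms, I would expand each matrix in the variable $Y$, turning an $\rdim \times \rdim$ matrix of bidegree $(\alpha,\beta)$ into a univariate polynomial matrix of size proportional to $\rdim\beta$ over $\field[X]$ of degree $\alpha$, and then invoke fast univariate polynomial matrix multiplication.

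The hard part, and the technical heart of the proof, is to obtain sharp bounds on the bidegrees (and effective dimensions) of the intermediate matrices at every recursion depth, so that summing the resulting per-node costs telescopes to $\softO{\rdim^\expmm d^{\expmm+2}}$. The bounds rely on the trivial syzygies $X^d \evec{i}$ and $Y^d \evec{i}$ belonging to $\syzmod{\nodule}{\F}$, together with their partial analogues controlling the intermediate $\syzmod{\nodule_k}{\F}$: these force the bidegrees to stay within a box determined by which coefficient-extraction functionals have already been absorbed at a given node. Tracking this carefully along the recursion---including the fact that the bidegree box is strictly shrinking as one descends the tree---should yield a geometric sum dominated by the root, whose cost matches the claimed bound.
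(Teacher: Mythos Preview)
Your setup is correct, but the cost analysis rests on a false premise: the intermediate matrices are \emph{not} in \(\pmatRing{\rdim}{\rdim}\). Every call to the base case can add up to \(\nvar-1=1\) row to the current basis (see the shape of the elementary Gr\"obner basis in \cref{eqn:egb}), so after processing \(k\) functionals the basis may have as many as \(\rdim+k\) rows; the general bound in \cref{lem:general:size_gb} only gives \(\gbdim\le\rdim+(\nvar-1)\vsdim=\rdim+d^2\). With matrices of that size, the per-node multiplication cost you sketch would be far too large. What saves the day in the paper is the Pad\'e-specific bound of \cref{cor:pade:size_gb}, which caps the row count at \(M=\rdim\min(d,e)=\rdim d\); the root multiplication then costs \(\softO{M^\expmm d^2}=\softO{\rdim^\expmm d^{\expmm+2}}\), which is exactly the target.

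That bound, however, only applies when the intermediate module \(\nodule_i\) is itself of Pad\'e shape \(\genBy{X^{a},Y^{b}}\). Your degree-refining ordering of the coefficient functionals does not preserve this: after absorbing, say, the functionals for \(1,X,Y\), the kernel is \(\genBy{X^2,XY,Y^2}\), which is not of the required form, so neither \cref{cor:pade:size_gb} nor the degree bound of \cref{lem:pade:degree} is available at intermediate nodes. The paper avoids this by ordering the functionals term-over-position lexicographically and splitting the recursion one variable at a time (\cref{algo:pade}), so that every subproblem is again a Pad\'e instance with smaller exponents; it also truncates each product \(\gbb_2\gbb_1\) modulo \(\genBy{X^{d+1},Y^{d+1}}\) via \cref{cor:pade:truncate_gb}, without which the bidegrees would drift upward across levels. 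Both ingredients are missing from your plan and are essential to reach the stated complexity.
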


In this case the vector space dimension is \(\vsdim=d^2\). Thus, as noted above
and to the best of our knowledge, the fastest previously known algorithm for
this task has a cost of \(\softO{\rdim d^{2(\expmm-1)} + d^{2\expmm}}\)
operations in \(\field\) and does not exploit fast polynomial multiplication.

\paragraph{Perspectives.}

The base case of our divide and conquer algorithm concerns the case \(\nodule =
\ker(\lf)\) of a single linear functional, detailed in \cref{sec:basecase}; we
thus work in a vector space \(\polMod{\edim}/\nodule\) of dimension \(1\). A
natural perspective is to improve the efficiency of our algorithm thanks to a
better exploitation of fast linear algebra by grouping several base cases
together; using fast linear algebra to accelerate the base case was a key
strategy in obtaining efficient univariate algorithms
\cite{GiJeVi03,JeNeScVi17}. In the context of Pad\'e approximation, where one
can introduce the variables one after another, one could also try to
incorporate known algorithms for the univariate case.

One reason why these improvements are not straightforward to do in the
multivariate case is that there is no direct generalization of a property at
the core of the correctness of univariate algorithms. This property (see
\cite[Lem.\,2.4]{JeaNeiVil2020}) states that if \(\gb_1\) is a
\(\ord_1\)-Gr\"obner basis of \(\nodule_1 \supset \nodule\) and \(\gb_2\) is a
\(\ord_2\)-Gr\"obner basis of \(\syzmod{\nodule}{\gb_1}\), then \(\gb_2\gb_1\)
is a \(\ord_1\)-Gr\"obner basis of \(\nodule\), provided that the order
\(\ord_2\) is well chosen (a Schreyer order for \(\gb_1\) and \(\ord_1\), see
\cref{sec:preliminaries:schreyer}). We give a counterexample to such a property
in \cref{eg:product_gb_counterex}. It remains open to find a similar general
property that would help to design algorithms based on matrix multiplication in
the multivariate case.

Another difficulty arises in analyzing the complexity of our divide and conquer
scheme in contexts where the number of elements in the sought Gr\"obner basis
is not well controlled, such as rational interpolation. Indeed, this number
corresponds to the size of the matrices used in the algorithm, and therefore is
directly related to the cost of the matrix multiplication. In fact, the
worst-case number of elements depends on the monomial order and is often
pessimistic compared to what is observed in a generic situation. Thus, future
work involves investigating complexity bounds for generic input and for
interesting particular cases other than Pad\'e approximation.

\section{Preliminaries}
\label{sec:preliminaries}

\subsection{Notation}
\label{sec:preliminaries:notation}

Here and hereafter, the coordinate vector with \(1\) at index \(i\) is denoted
by \(\evec{i}\); its dimension is inferred from the context.  A monomial in
\(\polMod{\rdim}\) is an element of the form \(\nu\evec{i}\) for some \(1 \le i
\le \rdim\) and some monomial \(\nu\) in \(\ring\); \(i\) is called the support
of \(\nu\evec{i}\).  We denote by \(\monom{\polMod{\rdim}}\) the set of all
monomials in \(\polMod{\rdim}\).  A term is a monomial multiplied by a nonzero
constant from \(\field\). The elements of \(\polMod{\rdim}\) are
\(\field\)-linear combinations of elements of \(\monom{\polMod{\rdim}}\) and
are called polynomials.

Elements in \(\ring\) are written in regular font (e.g.~monomials \(\mu\) and
\(\nu\) and polynomials \(\ff\) and \(\pp\)), while elements in
\(\polMod{\rdim}\) are boldfaced (e.g.~monomials \(\mmu\) and
\(\boldsymbol{\nu}\) and polynomials \(\f\) and \(\p\)). Vectors or (ordered)
lists of polynomials in \(\polMod{\rdim}\) are seen as matrices, written
in boldfaced capital letters; precisely, \((\p_1,\ldots,\p_\gbdim) \in
(\polMod{\rdim})^\gbdim\) is seen as a matrix \(\mat{P} \in
\pmatRing{\gbdim}{\rdim}\) whose \(i\)th row is \(\p_i\). In particular, in
what follows the default orientation is to see an element of \(\polMod{\rdim}\)
as a \emph{row} vector in \(\pmatRing{1}{\rdim}\).

For the sake of completeness, we recall below in
\cref{sec:preliminaries:monomial_order,sec:preliminaries:gb,sec:preliminaries:schreyer}
some classical definitions from commutative algebra concerning submodules of
\(\polMod{\rdim}\); we assume familiarity with the corresponding notions
concerning ideals of \(\ring\). For a more detailed introduction the reader may
refer to \cite{CoLiOSh05,CoLiOSh07,Eisenbud95}. 

\subsection{Monomial orders for modules}
\label{sec:preliminaries:monomial_order}

A monomial order on \(\polMod{\rdim}\) is a total order \(\ord\) on
\(\monom{\polMod{\rdim}}\) such that, for \(\nu \in \monom{\ring}\) and
\(\mmu_1,\mmu_2 \in \monom{\polMod{\rdim}}\) with \(\mmu_1 \ord \mmu_2\), one
has \(\mmu_1 \ord \nu \mmu_1 \ord \nu \mmu_2\); hereafter \(\mmu_1 \ordneq
\mmu_2\) means that \(\mmu_1\ord\mmu_2\) and \(\mmu_1 \neq \mmu_2\). For \(\p
\in \polMod{\rdim}\), its \(\ord\)-leading monomial is denoted by
\(\lm{\ord}{\p}\) and is the largest of its monomials with respect to the order
\(\ord\) (we take the convention \(\lm{\ord}{\matz} = \matz\) for \(\matz \in
\polMod{\rdim}\) the zero element). We extend this notation to collections of
polynomials \(\mathcal{P} \subset \polMod{\rdim}\) with
\(\lm{\ord}{\mathcal{P}} = \{\lm{\ord}{\p} : \p \in \mathcal{P}\}\), and to
matrices \(\gb \in \pmatRing{\gbdim}{\rdim}\) with \(\lm{\ord}{\gb}\) the
\(\gbdim \times \rdim\) matrix whose $i$th row is the \(\ord\)-leading monomial
of the \(i\)th row of \(\gb\).

\begin{example}
  \label{ex:ord_toplex}
  The usual lexicographic comparison is a monomial order on \(\field[X,Y]\):
  \(X^a Y^b \ordLex X^{a'} Y^{b'}\) if and only if \(a < a'\) or (\(a=a'\) and
  \(b < b'\)). It can be used to define a monomial order on \(\field[X,Y]^2\),
  called the term-over-position lexicographic order: for \(\mu,\nu\) in
  \(\monom{\field[X,Y]}\) and \(i,j\) in \(\{1,2\}\), \(\mu \evec{i}
  \ordTOPLex \nu \evec{j}\) if and only if \(\mu \ordLex \nu\) or (\(\mu =
  \nu\) and \(i<j\)).
\end{example}
We refer to \cite[Sec.\,1.\S2 and 5.\S2]{CoLiOSh05} for other classical
monomial orders, such as the degree reverse lexicographical order on
\(\ring\), and the construction of term-over-position and position-over-term
orders on \(\polMod{\rdim}\) from monomial orders on \(\ring\).

A monomial order \(\ord\) on \(\polMod{\rdim}\) induces a monomial order
\(\ord_i\) on \(\ring\) for each \(1 \le i \le \rdim\), by restricting to the
\(i\)th coordinate: for \(\nu_1,\nu_2 \in \monom{\ring}\), \(\nu_1 \ord_i
\nu_2\) if and only if \(\nu_1 \evec{i} \ord \nu_2 \evec{i}\). In particular,
\(\lm{\ord}{\qq \p}\) is a multiple of \(\lm{\ord}{\p}\) for \(\qq \in \ring\)
and \(\p \in \polMod{\rdim}\):
\begin{lemma}
  \label{lem:leading_modules}
  Let \(\bar{\imath}\) be the support of \(\lm{\ord}{\p}\). Then \(\lm{\ord}{\qq \p} =
  \lm{\ord_{\bar{\imath}}}{\qq} \lm{\ord}{\p}\).
\end{lemma}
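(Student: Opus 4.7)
The plan is to expand the product $\qq \p$ on the monomial basis of $\polMod{\rdim}$ and show that $\lm{\ord_{\bar{\imath}}}{\qq}\,\lm{\ord}{\p}$ is the unique strict maximum appearing, with nonzero coefficient. First I would dispose of trivial edge cases: if $\p = \matz$ or $\qq = 0$, then $\qq\p = \matz$ and both sides of the claimed identity vanish under the convention $\lm{\ord}{\matz} = \matz$. Hereafter assume $\p$ and $\qq$ are nonzero.

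Set $\mmu = \lm{\ord}{\p}$; by hypothesis $\mmu$ has support $\bar{\imath}$, so $\mmu = \nu \evec{\bar{\imath}}$ for some $\nu \in \monom{\ring}$. Set $\lambda = \lm{\ord_{\bar{\imath}}}{\qq}$. Decompose $\p = c_\mmu \mmu + \sum_j c_j \mmu_j$ with all $\mmu_j \ordneq \mmu$, and $\qq = d_\lambda \lambda + \sum_k d_k \nu_k$ with all $\nu_k \sordneq{\bar{\imath}} \lambda$ by definition of the respective leading monomials. Then $\qq\p$ expands as a $\field$-linear combination of products $\nu_k \mmu_j$, and the target becomes: whenever $(\nu_k, \mmu_j) \neq (\lambda, \mmu)$, one has $\nu_k \mmu_j \ordneq \lambda \mmu$.

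The core of the proof is a two-case argument using compatibility of $\ord$ with multiplication by monomials of $\ring$. If $\mmu_j = \mmu$ but $\nu_k \neq \lambda$, then $\nu_k \sordneq{\bar{\imath}} \lambda$ unfolds by definition to $\nu_k \evec{\bar{\imath}} \ordneq \lambda \evec{\bar{\imath}}$; multiplying both sides by $\nu$ and using the monomial-order axiom together with injectivity of monomial multiplication to upgrade the weak inequality back to a strict one, I obtain $\nu_k \mmu \ordneq \lambda \mmu$. If instead $\mmu_j \neq \mmu$, so $\mmu_j \ordneq \mmu$, the same compatibility and injectivity argument yields $\nu_k \mmu_j \ordneq \nu_k \mmu$; chaining this with $\nu_k \mmu \ord \lambda \mmu$ (which is an equality when $\nu_k = \lambda$ and otherwise follows from the previous subcase) gives $\nu_k \mmu_j \ordneq \lambda \mmu$.

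Consequently, the monomial $\lambda \mmu$ arises in the expansion of $\qq \p$ only from the pair $(\lambda, \mmu)$, hence with coefficient $c_\mmu d_\lambda$, which is nonzero because $\field$ is a field; all other monomials appearing are strictly $\ord$-smaller. This proves $\lm{\ord}{\qq\p} = \lambda \mmu = \lm{\ord_{\bar{\imath}}}{\qq}\,\lm{\ord}{\p}$. No step is deep; the main point of care is tracking strict versus non-strict inequalities, since the monomial-order axiom as stated only guarantees weak compatibility under multiplication, and strictness must be recovered each time from injectivity of multiplication by a nonzero monomial of $\ring$.
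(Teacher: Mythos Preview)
Your proof is correct and follows essentially the same approach as the paper's: expand $\qq\p$ into products of terms, then argue via a two-case analysis (fixing the leading term of $\p$ and varying that of $\qq$, versus varying the term of $\p$) that the product of leading terms is the unique strict $\ord$-maximum. Your version is slightly more careful in handling the zero cases and in tracking strict versus weak inequalities, but the substance of the argument is the same.
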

\begin{proof}
  Write \(\qq = \sum_{\ell} \nu_\ell\) and \({\p} = \sum_{i,j} \mu_{ij}
  \evec{i}\) for terms \(\mu_{ij},\nu_\ell\) in \(\ring\). Then \({\qq\p} =
  \sum_{\ell, i, j} \nu_\ell\mu_{ij} \evec{i}\), i.e.~the terms of \({\qq\p}\)
  are all those of the form \(\nu_\ell\mu_{ij} \evec{i}\).  Now let
  \(\bar\ell\) and \(\bar{\jmath}\) be such that
  \(\lm{\ord_{\bar{\imath}}}{\qq} = \nu_{\bar\ell}\) and \(\lm{\ord}{\p} =
  \mu_{\bar{\imath}\bar{\jmath}} \evec{\bar{\imath}}\). Then \(\nu_{\ell}
  \ordneq_{\bar{\imath}} \nu_{\bar\ell}\) for all \(\ell\neq\bar\ell\), which
  implies that \(\nu_{\ell}\mu_{\bar{\imath}\bar{\jmath}}
  \ordneq_{\bar{\imath}} \nu_{\bar\ell}\mu_{\bar{\imath}\bar{\jmath}}\) and
  thus, by definition of \(\ord_{\bar{\imath}}\), that
  \(\nu_{\ell}\mu_{\bar{\imath}\bar{\jmath}}\evec{\bar{\imath}} \ordneq
  \nu_{\bar\ell}\mu_{\bar{\imath}\bar{\jmath}}\evec{\bar{\imath}}\). On the
  other hand, \(\mu_{ij} \evec{i} \ordneq \mu_{\bar{\imath}\bar{\jmath}}
  \evec{\bar{\imath}}\) holds for all \((i,j) \neq
  (\bar{\imath},\bar{\jmath})\), hence \(\nu_\ell\mu_{ij} \evec{i} \ordneq
  \nu_\ell\mu_{\bar{\imath}\bar{\jmath}} \evec{\bar{\imath}}\). Therefore we
  obtain \(\nu_\ell\mu_{ij} \evec{i} \ord
  \nu_{\bar\ell}\mu_{\bar{\imath}\bar{\jmath}} \evec{\bar{\imath}}\) for all
  \((i,j,\ell)\), with equality only if \((i,j,\ell) =
  (\bar\imath,\bar\jmath,\bar\ell)\). This proves that \(\lm{\ord}{\qq \p} =
  \nu_{\bar\ell} \mu_{\bar{\imath}\bar{\jmath}} \evec{\bar{\imath}} =
  \lm{\ord_{\bar{\imath}}}{\qq}\lm{\ord}{\p}\).
\end{proof}

\subsection{Gr\"obner bases}
\label{sec:preliminaries:gb}

As a consequence of Hilbert's Basis Theorem, any \(\ring\)-submodule of
\(\polMod{\rdim}\) is finitely generated \cite[Prop.\,1.4]{Eisenbud95}. For a
(possibly infinite) collection of polynomials \(\mathcal{P} \subset
\polMod{\rdim}\), we denote by \(\genBy{\mathcal{P}}\) the \(\ring\)-submodule
of \(\polMod{\rdim}\) generated by the elements of \(\mathcal{P}\). Similarly,
for a matrix \(\gb\) in \(\pmatRing{\gbdim}{\rdim}\), \(\genBy{\gb}\) stands
for the \(\ring\)-submodule of \(\polMod{\rdim}\) generated by its rows, that
is, \(\genBy{\gb} = \{\q \gb \mid \q \in \polMod{\gbdim}\}\).

For a given submodule \(\module \subset \polMod{\rdim}\), the \(\ord\)-leading
module of \(\module\) is the module \(\genBy{\lm{\ord}{\module}}\) generated by
the leading monomials of the elements of \(\module\). Then, a matrix \(\gb\) in
\(\pmatRing{\gbdim}{\rdim}\) whose rows are in \(\module\) is said to be a
\(\ord\)-Gr\"obner basis of \(\module\) if
\[
  \genBy{\lm{\ord}{\module}} = \genBy{\lm{\ord}{\gb}} .
\]
In this case we have \(\genBy{\gb} = \module\) (see \cite[Ch.5,
Prop.2.7]{CoLiOSh05}), hence we will often omit the reference to the module
\(\module\) and just say that \(\gb\) is a \(\ord\)-Gr\"obner basis.

A \(\ord\)-Gr\"obner basis \(\gb\), whose rows are \((\p_1,\ldots,\p_\gbdim)\),
is said to be minimal if \(\lm{\ord}{\p_i}\) is not divisible by
\(\lm{\ord}{\p_j}\), for any \(j\neq i\). It is said to be reduced if it is
minimal and, for all \(1 \le i \le \gbdim\), \(\lm{\ord}{\p_i}\) is monic and
none of the terms of \(\p_i\) is divisible by any of \(\{\lm{\ord}{\p_j} \mid
j\neq i\}\). Given a monomial order \(\ord\) and an \(\ring\)-submodule
\(\module\subset \polMod{\rdim}\), there is a reduced \(\ord\)-Gr\"obner basis
of \(\module\) and it is unique (up to permutation of its elements)
\cite[Sec.\,15.2]{Eisenbud95}.

\begin{example}
  \label{ex:module_codim_one}
  The syzygy module
  \[
    \module = \{(\pp_1,\pp_2) \in \field[X,Y]^2 \mid \pp_1 - \pp_2 \in \genBy{X,Y}\}
    = \syzmod{\genBy{X,Y}}{[\begin{smallmatrix} 1 \\ -1 \end{smallmatrix}]}
  \]
  is generated by \((X\evec{1}, Y\evec{1}, \evec{1} + \evec{2})\), that is, by the
  rows of
  \[
    \gb =
    \begin{bmatrix}
      X & 0 \\
      Y & 0 \\
      1 & 1
    \end{bmatrix} \in \field[X,Y]^{3\times 2}.
  \]
  Furthermore, \(\gb\) is the reduced \(\ordTOPLex\)-Gr\"obner basis of \(\module\).
\end{example}

\subsection{Schreyer orders}
\label{sec:preliminaries:schreyer}

In the context of the computation of bases of syzygies it is generally
beneficial to use a specific construction of monomial orders, as first
highlighted by Schreyer \cite{Schreyer1980,Janet1920} (see also
\cite[Th.\,15.10]{Eisenbud95} and \cite{BerkeschSchreyer2015}).

In the univariate case, the notion of shifted degree plays the same role as
Schreyer orders and is ubiquitous in the computation of bases of modules of
syzygies \cite{GiJeVi03,ZhoLab12,JeNeScVi16}; an equivalent notion of defects
was also used earlier for M-Pad\'e and Hermite-Pad\'e approximation algorithms
\cite{Beckermann92,BecLab94}. Specifically, this provides a monomial order on
\(\polMod{\gbdim}\) constructed from a monomial order \(\ord\) on
\(\polMod{\rdim}\) and from the leading monomials of a \(\ord\)-Gr\"obner basis
in \(\polMod{\rdim}\) of cardinality \(\gbdim\).

\begin{definition}
    \label{dfn:schreyer_order}
  Let \(\ord\) be a monomial order on \(\polMod{\rdim}\), and let \(\LM =
  (\mmu_1, \ldots, \mmu_\gbdim)\) be a list of monomials of \(\polMod{\rdim}\).
  A \emph{Schreyer order for \(\ord\) and \(\LM\)} is any monomial order on
  \(\polMod{\gbdim}\), denoted by \(\sord{\LM}\), such that for \(\nu_1
  \evec{i}, \nu_2 \evec{j} \in \monom{\polMod{\gbdim}}\), if \({\nu_1\mmu_i}
  \ordneq {\nu_2 \mmu_j}\) then \(\nu_1\evec{i} \sord{\LM} \nu_2\evec{j}\).
\end{definition}
\noindent As noted above, this notion is often used with \(\LM =
\lm{\ord}{\gb}\) for a list of polynomials \(\gb \in \pmatRing{\gbdim}{\rdim}\),
which is typically a \(\ord\)-Gr\"obner basis.

Remark that \cref{dfn:schreyer_order} uses a strict inequality, and implies that if
\(\nu_1\evec{i} \sord{\LM} \nu_2\evec{j}\), then \({\nu_1 \mmu_i} \ordneq \nu_2
\mmu_j\) or \(\nu_1 \mmu_i = \nu_2 \mmu_j\). In particular, for
\(\nu_1=\nu_2=1\) and assuming \(\mmu_i \neq \mmu_j\) for all \(i\neq j\) (for
instance, if \(\LM=\lm{\ord}{\gb}\) for a minimal \(\ord\)-Gr\"obner basis
\(\gb\)), then \(\evec{i} \sord{\LM} \evec{j}\) if and only if \(\mmu_i \ordneq
\mmu_j\).

Furthermore, for every \(\ord\) and \(\LM\), a corresponding Schreyer order
exists and can be constructed explicitly: for example, \(\nu_1\evec{i}
\sord{\LM} \nu_2\evec{j}\) if and only if
\[
\nu_1 \mmu_i \ordneq \nu_2 \mmu_j
\text{ or } (\nu_1 \mmu_i = \nu_2 \mmu_j \text{ and } i < j).
\]
This specific Schreyer order is the one used in the algorithms in this paper,
where we write
\[
  \sord{\LM} { } \assign \algoname{SchreyerOrder}(\ord,\LM)
\]
to mean that the algorithm constructs it from \(\ord\) and \(\LM\).

\section{Base case of the divide and conquer scheme}
\label{sec:basecase}

In this section we present the base case of our main algorithm. It constructs
Gr\"obner bases for syzygies modulo the kernel of a single linear functional,
which we call elementary Gr\"obner bases and describe in
\cref{sec:basecase:egb}. Further in \cref{sec:codim_one:gb} we state properties
that are useful to prove the correctness of the base case algorithm given in
\cref{sec:basecase:algo}. Precisely, this correctness is written having in mind
the design of an algorithm handling several functionals iteratively by
repeating this basic procedure and multiplying the elementary bases together.

\subsection{Elementary Gr\"obner basis}
\label{sec:basecase:egb}

If \(\ideal \subset \ring\) is an ideal such that \(\ring/\ideal\) has
dimension \(1\) as a \(\field\)-vector space, then \(\ideal\) is maximal: it is
of the form \(\genBy{X_1-\alpha_1,\ldots,X_\nvar-\alpha_\nvar}\) for some point
\((\alpha_1,\ldots,\alpha_\nvar) \in \field^\nvar\), which directly yields the
reduced Gr\"obner basis of \(\ideal\), for any monomial order. In this paper,
we will make use of a similar property for submodules of \(\polMod{\rdim}\);
such submodules have Gr\"obner bases of the form
\begin{equation}
  \label{eqn:egb}
  \egb =
  \begin{bmatrix}
    \ident{\pivot-1} & \mat{\lambda}_1 \\
                   & \xcol - \mat{\alpha} \\
                   & \mat{\lambda}_2      & \ident{\rdim-\pivot}
  \end{bmatrix}
  \in \pmatRing{(\rdim+\nvar-1)}{\rdim},
\end{equation}
for the vector of variables \(\xcol = \trsp{[X_1 \; \cdots \; X_\nvar]}\) and
vectors of values \(\mat{\alpha} = \trsp{[\alpha_1 \; \cdots \; \alpha_\nvar]}
\in \matRing{\nvar}{1}\), \(\mat{\lambda}_1 = \trsp{[\lambda_1 \; \cdots \;
\lambda_{\pivot-1}]} \in \matRing{(\pivot-1)}{1}\), and \(\mat{\lambda}_2 =
\trsp{[\lambda_{\pivot+1} \; \cdots \; \lambda_{\rdim}]} \in
\matRing{(\rdim-\pivot)}{1}\). In what follows, such matrices are called
elementary Gr\"obner bases.


\begin{theorem}
  \label{thm:codim_one}
  Let \(\module\) be an \(\ring\)-submodule of \(\polMod{\rdim}\) such that
  \(\polMod{\rdim}/\module\) has dimension \(1\) as a \(\field\)-vector space,
  then for any monomial order \(\ord\) on \(\polMod{\rdim}\), the reduced
  \(\ord\)-Gr\"obner basis \(\egb\) of \(\module\) is as in
  \cref{eqn:egb} with \(\lambda_i = 0\) if \(\evec{i} \ordneq
  \evec{\pivot}\) for all \(i \neq \pivot\).
  Conversely, any matrix \(\egb\) as in \cref{eqn:egb} defines a
  submodule \(\module = \genBy{\egb}\) such that \(\polMod{\rdim}/\module\) has
  dimension \(1\) as a \(\field\)-vector space, and \(\egb\) is a reduced
  \(\ord\)-Gr\"obner basis for any monomial order \(\ord\) such that
  \(\lambda_i = 0\) if \(\evec{i} \ordneq \evec{\pivot}\) for all \(i \neq
  \pivot\).
\end{theorem}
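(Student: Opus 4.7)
My plan is to handle the two directions of the equivalence separately, using a single \(\field\)-linear functional \(\lf\) associated to \(\egb\) as the unifying tool. The forward direction reconstructs the shape \cref{eqn:egb} from the fact that \(\polMod{\rdim}/\module\) has a unique standard monomial; the converse uses \(\lf\) both to establish that the quotient has dimension one and to certify the Gr\"obner basis property.

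For the forward direction, since \(\dim_\field(\polMod{\rdim}/\module) = 1\) there is a unique monomial \(\mmu \in \monom{\polMod{\rdim}}\) outside \(\genBy{\lm{\ord}{\module}}\). I would first show \(\mmu = \evec{\pivot}\) for some \(\pivot\): if \(\mmu = \mu\evec{\pivot}\) for a non-constant monomial \(\mu = X_j \mu'\), then \(\mu'\evec{\pivot}\) would be a strictly smaller monomial outside the leading module, contradicting uniqueness. Once \(\evec{\pivot}\) is identified, the leading module is generated by every monomial other than \(\evec{\pivot}\), whose minimal generators are exactly \(\{\evec{i} \mid i\neq\pivot\}\cup\{X_j\evec{\pivot} \mid 1\le j\le \nvar\}\); these \(\rdim+\nvar-1\) monomials become the leading monomials of the reduced \(\ord\)-Gr\"obner basis. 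Reducedness then forces the tail of each row to be a scalar multiple of the unique standard monomial \(\evec{\pivot}\), which produces exactly the block shape of \cref{eqn:egb}. The constraint \(\lambda_i = 0\) whenever \(\evec{i} \ordneq \evec{\pivot}\) is forced by the requirement that \(\evec{i}\) (rather than \(\evec{\pivot}\)) be the leading monomial of the row \(\evec{i}+\lambda_i\evec{\pivot}\).

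For the converse, starting from \(\egb\) I would introduce the \(\field\)-linear functional \(\lf : \polMod{\rdim} \to \field\) defined by \(\lf(\p) = \pp_\pivot(\alpha_1,\ldots,\alpha_\nvar) - \sum_{i\neq\pivot}\lambda_i \pp_i(\alpha_1,\ldots,\alpha_\nvar)\). A direct check shows \(\lf\) vanishes on each row of \(\egb\), so \(\genBy{\egb}\subseteq\ker(\lf)\). The relations \(\evec{i} \equiv -\lambda_i \evec{\pivot}\) for \(i\neq\pivot\) and \(X_j\evec{\pivot} \equiv \alpha_j \evec{\pivot}\) modulo \(\genBy{\egb}\) collapse every monomial in \(\polMod{\rdim}\) to a scalar multiple of \(\evec{\pivot}\), yielding \(\dim_\field(\polMod{\rdim}/\genBy{\egb}) \le 1\), and the equality \(\lf(\evec{\pivot}) = 1\) upgrades this to equality. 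Under the hypothesis on the \(\lambda_i\), the leading monomials of the rows of \(\egb\) are exactly \(\{\evec{i}\}_{i\neq\pivot}\cup\{X_j\evec{\pivot}\}_j\), which together generate every monomial except \(\evec{\pivot}\); so the Gr\"obner basis claim reduces to ruling out \(\lm{\ord}{\p} = \evec{\pivot}\) for \(\p\in\genBy{\egb}\).

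This last step is where I expect the main technical point. Any such \(\p\) would have leading coefficient some nonzero \(c\), with \(\pp_\pivot = c\) a constant (since every other monomial at position \(\pivot\) is strictly larger than \(\evec{\pivot}\)) and nontrivial components only at positions \(i\neq\pivot\) satisfying \(\evec{i}\ordneq\evec{\pivot}\); the hypothesis then makes \(\lambda_i\) vanish for precisely those \(i\), so \(\lf(\p) = c \neq 0\), contradicting \(\p\in\ker(\lf)\). Minimality and reducedness follow immediately afterward: the leading monomials live at pairwise distinct positions or are distinct \(X_j\evec{\pivot}\), and each non-leading term is a scalar multiple of the unique standard monomial \(\evec{\pivot}\), which is not divisible by any leading monomial.
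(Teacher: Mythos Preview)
Your proof is correct. The forward direction mirrors the paper's argument almost exactly (in fact you are a bit more explicit than the paper about why the unique standard monomial must be a coordinate vector \(\evec{\pivot}\), via closure of the leading module under monomial multiplication).

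The converse direction is where you genuinely diverge. The paper does not introduce a linear functional; instead it writes a hypothetical \(\q\in\genBy{\egb}\) with \(\lm{\ord}{\q}=\evec{\pivot}\) explicitly as an \(\ring\)-linear combination of the rows of \(\egb\), argues that the coefficients \(\qq_i\) with \(\evec{\pivot}\ordneq\evec{i}\) must vanish, uses the hypothesis \(\lambda_i=0\) for the remaining \(i\), and reads off from the \(\pivot\)th coordinate the contradiction \(1\in\genBy{X_1-\alpha_1,\ldots,X_\nvar-\alpha_\nvar}\). Your approach packages the same obstruction into the single functional \(\lf(\p)=\pp_\pivot(\mat{\alpha})-\sum_{i\neq\pivot}\lambda_i\,\pp_i(\mat{\alpha})\): since \(\genBy{\egb}\subseteq\ker(\lf)\) and any \(\p\) with \(\lm{\ord}{\p}=\evec{\pivot}\) would satisfy \(\lf(\p)=c\neq0\), the contradiction is immediate. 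Your route is slightly cleaner and simultaneously certifies \(\dim_\field(\polMod{\rdim}/\genBy{\egb})=1\) (via \(\lf(\evec{\pivot})=1\)); the paper's route avoids introducing \(\lf\) but makes the algebraic content of the obstruction---that the maximal ideal \(\genBy{X_j-\alpha_j}\) is proper---more visible.
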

\begin{proof}
  By \cite[Thm.\,15.3]{Eisenbud95}, a basis of \(\polMod{\rdim}/\module\) as a
  \(\field\)-vector space is given by the monomials not in
  \(\lm{\ord}{\module}\); since the dimension of \(\polMod{\rdim}/\module\) as
  a \(\field\)-vector space is \(1\), there exists a unique monomial which is
  not in \(\lm{\ord}{\module}\). Thus there is a unique \(\pivot \in
  \{1,\ldots,\rdim\}\) such that
  \begin{equation}
  \label{eqn:lm_codim_one}
    \lm{\ord}{\egb} = (\evec{1}, \ldots, \evec{\pivot-1}, X_1 \evec{\pivot}, \ldots, X_\nvar \evec{\pivot}, \evec{\pivot+1}, \ldots, \evec{\rdim}).
  \end{equation}
  By definition of reduced Gr\"obner bases, the \(j\)th polynomial in \(\egb\)
  is the sum of the \(j\)th element of \(\lm{\ord}{\egb}\) and a constant
  multiple of \(\evec{\pivot}\); hence \(\egb\) has the form in
  \cref{eqn:egb}. In addition, for \(i\neq \pivot\), the equality
  \(\lm{\ord}{\evec{i} + \lambda_i \evec{\pivot}} = \evec{i}\) implies that
  \(\lambda_i = 0\) whenever \(\evec{i} \ordneq \evec{\pivot}\).

  For the converse, let \({\ord}\) be such that \(\lambda_i = 0\) if \(\evec{i}
  \ordneq \evec{\pivot}\) for all \(i \neq \pivot\) (such an order exists since
  there are orders for which \(\evec{\pivot}\) is the smallest coordinate
  vector). Then \(\lm{\ord}{\egb}\) is as in \cref{eqn:lm_codim_one}; in
  particular, the monomials in \(\genBy{\lm{\ord}{\egb}}\) are precisely
  \(\monom{\polMod{\rdim}}\setminus\{\evec{\pivot}\}\). It follows that either
  \(\evec{\pivot} \in \lm{\ord}{\module}\) and \(\genBy{\lm{\ord}{\module}} =
  \polMod\rdim\), or \(\evec{\pivot} \not\in \lm{\ord}{\module}\) and
  \(\genBy{\lm{\ord}{\module}}=\genBy{\lm{\ord}{\egb}}\). In the second case
  \(\egb\) is a reduced \(\ord\)-Gr\"obner-basis and \(\polMod{\rdim}/\module\)
  has dimension 1 by \cite[Thm.\,15.3]{Eisenbud95}. To conclude the proof, we
  show that \(\evec{\pivot} \in \lm{\ord}{\module}\) cannot occur; by
  contradiction, suppose there exists \(\q \in \module\) such that
  \(\lm{\ord}{\q} = \evec{\pivot}\). Since the rows of \(\egb\) generate
  \(\module\), we can write
  \begin{align*}
    \q & = (\qq_1,\ldots,\qq_{\pivot-1},\pp_1,\ldots,\pp_\nvar,\qq_{\pivot+1},\ldots,\qq_\rdim) \,\egb \\
       & = \left(\qq_1,\ldots,\qq_{\pivot-1},\sum_{i\neq\pivot}\qq_i\lambda_i+\sum_{j=1}^\nvar(X_j-\alpha_j) \pp_j,\qq_{\pivot+1},\ldots,\qq_\rdim\right).
  \end{align*}
  For \(i \neq \pivot\) such that \(\evec{\pivot} \ordneq \evec{i}\), any
  nonzero term of \(\qq_i \evec{i}\) would appear in \(\q\) and be greater than
  \(\evec{\pivot}\), hence \(\qq_i = 0\). Moreover, for \(i \neq \pivot\) such
  that \(\evec{i} \ordneq \evec{\pivot}\) we have \(\lambda_i = 0\). Thus,
  considering the \(\pivot\)th component of \(\q\) yields the equality
  \[
    1 = \sum_{i\neq\pivot}\qq_i\lambda_i+\sum_{j=1}^\nvar(X_j-\alpha_j) \pp_j = \sum_{j=1}^\nvar(X_j-\alpha_j)\pp_j
  \]
  which is a contradiction since \(1 \not\in
  \genBy{X_1-\alpha_1,\ldots,X_\nvar-\alpha_\nvar}\).
\end{proof}

Remark that in the module case (\(\rdim \geq 2\)) the reduced
\(\ord\)-Gr\"obner basis depends on the order \(\ord\), more precisely on how
the \(\evec{i}\)'s are ordered by \(\ord\). For instance, the matrix in
\cref{ex:module_codim_one} is a reduced \(\ord\)-Gr\"obner basis for every
order such that \(\evec{1} \ord \evec{2}\), whereas for orders such that
\(\evec{2} \ord \evec{1}\) the reduced \(\ord\)-Gr\"obner basis of the same
module is
\[
  \egb =
  \begin{bmatrix}
    1 & 1 \\
    0 & X \\
    0 & Y
  \end{bmatrix} \in \field[X,Y]^{3\times 2}.
\]

\subsection{Multiplying by elementary Gr\"obner bases}
\label{sec:codim_one:gb}

Let \(\ord\) be a monomial order on \(\polMod{\rdim}\) and let \(\gb = (\p_1,
\ldots, \p_\gbdim) \in \pmatRing{k}{\rdim}\) be a \(\ord\)-Gr\"obner basis. In
this section, we show conditions on an elementary Gr\"obner basis \(\egb\) to
ensure that \(\egb \gb\) is a \(\ord\)-Gr\"obner basis.

We write \(\LM = (\mmu_1,\ldots,\mmu_\gbdim)\) for \(\lm{\ord}{\gb}\), that is,
\(\mmu_i = \lm{\ord}{\p_i}\) for \(1 \le i \le \gbdim\). Let \(\sord{\LM}\) be
a Schreyer order for \(\ord\) and \(\gb\), and consider a reduced
\(\sord{\LM}\)-Gr\"obner basis \(\egb \in \pmatRing{(\gbdim+\nvar-1)}{\gbdim}\)
which has the form in \cref{eqn:egb}; thus
\begin{align*}
  \egb\gb = ( & \p_{1} + \lambda_1 \p_{\pivot}, \ldots, \p_{\pivot-1} + \lambda_{\pivot-1} \p_{\pivot}, \\
  & (X_1-\alpha_1) \p_{\pivot}, \ldots, (X_\nvar-\alpha_\nvar) \p_{\pivot}, \\
  & \lambda_{\pivot+1} \p_{\pivot} + \p_{\pivot+1}, \ldots, \lambda_{\gbdim} \p_{\pivot} + \p_{\gbdim})
\end{align*}
which is in \(\pmatRing{(\gbdim+\nvar-1)}{\rdim}\). We will show that, under
suitable assumptions, \(\egb\gb\) is a \(\ord\)-Gr\"obner basis; the next
lemmas use the above notation. We start by describing the leading terms of
\(\egb\gb\).

\begin{lemma}
  \label{lem:leading}
  If \(\mmu_i \neq \mmu_\pivot\) for all \(i \neq \pivot\), then
  \begin{align*}
    \lm{\ord}{\egb\gb} & = \lm{\sord{\LM}}{\egb} \, \LM \\
                       & = (\mmu_1, \ldots, \mmu_{\pivot-1},
                           X_1 \mmu_\pivot, \ldots, X_\nvar \mmu_\pivot,
                           \mmu_{\pivot+1}, \ldots, \mmu_{\gbdim}).
 \end{align*}
\end{lemma}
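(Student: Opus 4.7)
The plan is to first compute $\lm{\sord{\LM}}{\egb}$ by applying \cref{thm:codim_one} to the order $\sord{\LM}$ and the reduced $\sord{\LM}$-Gr\"obner basis $\egb$. Writing $\evec{i}$ for the standard basis of $\polMod{\gbdim}$, this yields
\[
  \lm{\sord{\LM}}{\egb} = (\evec{1}, \ldots, \evec{\pivot-1}, X_1 \evec{\pivot}, \ldots, X_\nvar \evec{\pivot}, \evec{\pivot+1}, \ldots, \evec{\gbdim}),
\]
and right-hand multiplication by $\LM$ replaces each $\evec{i}$ by the $i$th row $\mmu_i$ of $\LM$; this establishes the claimed second equality.

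For the first equality, I would compute the leading monomial of each row of $\egb\gb$ separately. The middle $\nvar$ rows are $(X_j - \alpha_j)\p_\pivot$ for $j = 1, \ldots, \nvar$. Applying \cref{lem:leading_modules} with $\qq = X_j - \alpha_j$, and using that $\ord_{\bar\imath}$ is a monomial order on $\ring$ so that $X_j$ is the $\ord_{\bar\imath}$-leading monomial of $X_j - \alpha_j$, directly yields $\lm{\ord}{(X_j-\alpha_j)\p_\pivot} = X_j \mmu_\pivot$.

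For the remaining $\gbdim - 1$ rows, which have the form $\p_i + \lambda_i \p_\pivot$ (or $\lambda_i \p_\pivot + \p_i$, depending on whether $i < \pivot$ or $i > \pivot$) with $i \neq \pivot$, the key step is to show that their $\ord$-leading monomial equals $\mmu_i$. By \cref{thm:codim_one} applied to $\egb$, we have $\lambda_i = 0$ whenever $\evec{i} \sordneq{\LM} \evec{\pivot}$. Hence if $\lambda_i \neq 0$ then $\evec{\pivot} \sord{\LM} \evec{i}$, and the hypothesis $\mmu_i \neq \mmu_\pivot$ together with the remark following \cref{dfn:schreyer_order} translates this into $\mmu_\pivot \ordneq \mmu_i$. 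In either case ($\lambda_i = 0$, or $\lambda_i \neq 0$ with $\mmu_\pivot \ordneq \mmu_i$), the term $\lambda_i \p_\pivot$ is $\ord$-dominated by $\p_i$, so the leading monomial of that row is $\mmu_i$.

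The main technical point is precisely this translation between the Schreyer order $\sord{\LM}$ on $\polMod{\gbdim}$, which controls the shape of $\egb$ via \cref{thm:codim_one}, and the original order $\ord$ on $\polMod{\rdim}$, which governs the leading terms of the product $\egb\gb$. The hypothesis $\mmu_i \neq \mmu_\pivot$ is what makes the implication $\evec{\pivot} \sord{\LM} \evec{i} \Rightarrow \mmu_\pivot \ordneq \mmu_i$ hold (the converse is immediate from \cref{dfn:schreyer_order}), ensuring that the $\sord{\LM}$-reducedness of $\egb$ forces the desired noncancellation in $\egb\gb$.
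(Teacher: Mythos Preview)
Your proof is correct and follows essentially the same approach as the paper: both handle the middle rows $(X_j-\alpha_j)\p_\pivot$ directly, and for the rows $\p_i+\lambda_i\p_\pivot$ with $i\neq\pivot$ both use that $\lambda_i\neq 0$ forces $\evec{\pivot}\sord{\LM}\evec{i}$, which together with the hypothesis $\mmu_i\neq\mmu_\pivot$ yields $\mmu_\pivot\ordneq\mmu_i$. You are slightly more explicit in invoking \cref{lem:leading_modules} for the middle rows and in deriving the second equality via \cref{thm:codim_one}, but the argument is the same.
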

\begin{proof}
  First, \(\lm{\ord}{(X_j-\alpha_j) \p_{\pivot}} = X_j \mmu_\pivot\) for \(1\le
  j\le \nvar\). Next we claim that \(\lm{\ord}{\p_i + \lambda_i \p_{\pivot}} =
  \mmu_i\) for all \(i \neq \pivot\). If \(\lambda_i=0\), the identity is
  obvious. If \(\lambda_i \neq 0\), then \(\evec{\pivot} \sord{\LM} \evec{i}\)
  (see \cref{sec:basecase:egb}), and from the definition of a Schreyer
  order and the assumption \(\mmu_\pivot \neq \mmu_i\), we deduce \(\mmu_\pivot
  \ordneq \mmu_i\) and hence \(\lm{\ord}{\p_i + \lambda_i \p_{\pivot}} =
  \mmu_i\).
\end{proof}

Next, we characterize the fact that \(\egb\gb\) generates a submodule which
differs from the one generated by \(\gb\).

\begin{lemma}
  \label{lem:equivalence}
  If \(\mmu_i \neq \mmu_\pivot\) for all \(i \neq \pivot\), then
  \[
    \genBy{\egb\gb} \neq \genBy{\gb}
    \;\Leftrightarrow\;
    \p_{\pivot} \not\in \genBy{\egb\gb}
    \;\Leftrightarrow\;
    \mmu_\pivot \not\in \genBy{\lm{\ord}{\genBy{\egb\gb}}}.
  \]
\end{lemma}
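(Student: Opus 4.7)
The plan is to first note that $\genBy{\egb\gb} \subseteq \genBy{\gb}$ since every row of $\egb\gb$ is an $\ring$-linear combination of the rows of $\gb$; this inclusion will be used throughout. From there I would handle each of the two equivalences in turn, treating the second one as the main obstacle.

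For the first equivalence, the direction $\p_\pivot \notin \genBy{\egb\gb} \Rightarrow \genBy{\egb\gb} \neq \genBy{\gb}$ is immediate, since $\p_\pivot \in \genBy{\gb}$. For the converse, I would observe that the rows of $\egb\gb$ of the form $\p_i + \lambda_i \p_\pivot$ (for $i \neq \pivot$) together with $\p_\pivot$ allow one to recover every $\p_i$: indeed, $\p_i = (\p_i + \lambda_i \p_\pivot) - \lambda_i \p_\pivot$ lies in $\genBy{\egb\gb}$ as soon as $\p_\pivot$ does. Hence $\p_\pivot \in \genBy{\egb\gb}$ forces $\genBy{\gb} \subseteq \genBy{\egb\gb}$, and combined with the reverse inclusion we get equality.

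For the second equivalence, the easy direction is $\p_\pivot \in \genBy{\egb\gb} \Rightarrow \mmu_\pivot = \lm{\ord}{\p_\pivot} \in \lm{\ord}{\genBy{\egb\gb}} \subseteq \genBy{\lm{\ord}{\genBy{\egb\gb}}}$. The interesting direction is the converse, which I would prove by contraposition: assume $\mmu_\pivot \in \genBy{\lm{\ord}{\genBy{\egb\gb}}}$ and show $\p_\pivot \in \genBy{\egb\gb}$. By \cref{lem:leading}, the rows of $\egb\gb$ themselves already have leading monomials $\mmu_1,\ldots,\mmu_{\pivot-1}, X_1\mmu_\pivot,\ldots,X_\nvar\mmu_\pivot,\mmu_{\pivot+1},\ldots,\mmu_\gbdim$. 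So $\genBy{\lm{\ord}{\genBy{\egb\gb}}}$ contains all the $\mmu_i$ with $i \neq \pivot$ and all $X_j \mmu_\pivot$; adjoining $\mmu_\pivot$ yields a module containing all the $\mmu_i$, i.e., containing $\genBy{\lm{\ord}{\gb}} = \genBy{\lm{\ord}{\genBy{\gb}}}$ (the last equality because $\gb$ is a $\ord$-Gr\"obner basis). Combined with the inclusion $\lm{\ord}{\genBy{\egb\gb}} \subseteq \lm{\ord}{\genBy{\gb}}$ coming from $\genBy{\egb\gb} \subseteq \genBy{\gb}$, we get the equality of leading modules $\genBy{\lm{\ord}{\genBy{\egb\gb}}} = \genBy{\lm{\ord}{\genBy{\gb}}}$.

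The main obstacle is concluding $\genBy{\egb\gb} = \genBy{\gb}$ from this equality of leading modules together with the inclusion $\genBy{\egb\gb} \subseteq \genBy{\gb}$. For this I would invoke (or briefly re-prove) the standard fact that if $\module_1 \subseteq \module_2$ and $\genBy{\lm{\ord}{\module_1}} = \genBy{\lm{\ord}{\module_2}}$, then $\module_1 = \module_2$: pick a counterexample $\p \in \module_2 \setminus \module_1$ with minimal $\ord$-leading monomial, find $\p' \in \module_1$ with the same leading term (using the hypothesis on leading modules), and note that $\p - c\p' \in \module_2 \setminus \module_1$ has strictly smaller leading monomial, contradicting well-foundedness of $\ord$ on monomials. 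Once we have $\genBy{\egb\gb} = \genBy{\gb}$, we conclude that $\p_\pivot \in \genBy{\egb\gb}$, closing the loop.
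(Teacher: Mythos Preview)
Your proposal is correct and follows essentially the same approach as the paper: the key step---assuming $\mmu_\pivot \in \genBy{\lm{\ord}{\genBy{\egb\gb}}}$, using \cref{lem:leading} to obtain equality of leading modules, and then invoking the standard fact (stated in the paper as \cite[Lem.\,15.5]{Eisenbud95}) that inclusion plus equal leading modules forces equality---is exactly what the paper does. The only cosmetic difference is organizational: the paper proves the three conditions equivalent via a single cycle of implications (so the direct argument you give for $\p_\pivot \in \genBy{\egb\gb} \Rightarrow \genBy{\egb\gb} = \genBy{\gb}$ is never needed), whereas you handle the two biconditionals separately.
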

\begin{proof}
  First, remark that \(\genBy{\egb\gb} = \genBy{\gb} \Rightarrow {\p_{\pivot}}
  \in \genBy{\egb\gb} \Rightarrow \mmu_\pivot \in
  \genBy{\lm{\ord}{\genBy{\egb\gb}}}\) is obvious; thus, to conclude the proof
  it remains to show that \(\genBy{\egb\gb} = \genBy{\gb} \Leftarrow
  \mmu_\pivot \in \genBy{\lm{\ord}{\genBy{\egb\gb}}}\). Suppose that
  \(\mmu_\pivot \in \genBy{\lm{\ord}{\genBy{\egb\gb}}}\). Then,
  since \(\mmu_i \in \genBy{\lm{\ord}{\genBy{\egb\gb}}}\) for all
  \(i \neq \pivot\) by \cref{lem:leading}, we have \(\lm{\ord}{{\gb}} \subset
  \genBy{\lm{\ord}{\genBy{\egb\gb}}}\), hence \(\genBy{\lm{\ord}{{\gb}}}
  \subset \genBy{\lm{\ord}{\genBy{\egb\gb}}}\). Furthermore, recall that
  \(\genBy{\lm{\ord}{\gb}} = \genBy{\lm{\ord}{\genBy{\gb}}}\) since \(\gb\) is
  a \(\ord\)-Gr\"obner basis, and that \(\genBy{\lm{\ord}{\genBy{\egb\gb}}}
  \subset \genBy{\lm{\ord}{\genBy{\gb}}}\) since \(\genBy{\egb\gb} \subset
  \genBy{\gb}\): we obtain \(\genBy{\lm{\ord}{\genBy{\gb}}} =
  \genBy{\lm{\ord}{\genBy{\egb\gb}}}\). Then, \cite[Lemma 15.5]{Eisenbud95}
  shows that \(\genBy{\egb\gb} = \genBy{\gb}\).
\end{proof}

For example, if \(\gb\) is a \emph{minimal} \(\ord\)-Gr\"obner basis, then the
assumption in the previous lemma is satisfied. \Cref{eg:product_gb_counterex} below
exhibits a case where \(\gb\) is a minimal \(\ord\)-Gr\"obner basis and
\(\p_\pivot\) does belong to \(\genBy{\egb\gb}\). In that case,
\(\genBy{\egb\gb} = \genBy{\gb}\) and \(\egb\gb\) is not a Gr\"obner basis
since \(\mmu_\pivot\) is in $\genBy{\lm{\ord}{\genBy{\egb\gb}}}$ but not in
$\genBy{\lm{\ord}{\egb\gb}}$.

\begin{lemma}
  \label{lem:mul_by_one}
  If \(\mmu_i \neq \mmu_\pivot\) for all \(i \neq \pivot\) and
  \(\genBy{\egb\gb} \neq \genBy{\gb}\), then \(\egb\gb\) is a
  \(\ord\)-Gr\"obner basis.
\end{lemma}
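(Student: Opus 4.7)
The plan is to establish the Gröbner-basis equality $\genBy{\lm{\ord}{\genBy{\egb\gb}}} = \genBy{\lm{\ord}{\egb\gb}}$ by leveraging \cref{lem:leading} (which gives an explicit list of leading monomials of $\egb\gb$) and \cref{lem:equivalence} (whose third equivalent statement is exactly what is needed to rule out a potentially bad case). The inclusion $\supseteq$ is automatic since the rows of $\egb\gb$ lie in $\genBy{\egb\gb}$, so the real work is the reverse inclusion, which I would reduce to a monomial-wise check.

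Concretely, I would pick an arbitrary $\p \in \genBy{\egb\gb}$ and argue that $\lm{\ord}{\p}$ lies in $\genBy{\lm{\ord}{\egb\gb}}$. Since $\genBy{\egb\gb} \subset \genBy{\gb}$ and $\gb$ is assumed to be a $\ord$-Gröbner basis, we have $\lm{\ord}{\p} \in \genBy{\lm{\ord}{\gb}} = \genBy{\mmu_1,\ldots,\mmu_\gbdim}$, so $\lm{\ord}{\p}$ is a multiple of some $\mmu_i$.

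From here I would split into cases according to whether $i \neq \pivot$ or $i = \pivot$. In the first case, $\mmu_i$ itself is one of the generators of $\genBy{\lm{\ord}{\egb\gb}}$ listed by \cref{lem:leading}, so $\lm{\ord}{\p} \in \genBy{\lm{\ord}{\egb\gb}}$ is immediate. In the case $i = \pivot$, either $\lm{\ord}{\p}$ is a \emph{proper} multiple of $\mmu_\pivot$, in which case it is divisible by some $X_j \mmu_\pivot$, which by \cref{lem:leading} is again among the generators of $\genBy{\lm{\ord}{\egb\gb}}$; or else $\lm{\ord}{\p} = \mmu_\pivot$, which would force $\mmu_\pivot \in \genBy{\lm{\ord}{\genBy{\egb\gb}}}$.

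The only nontrivial subcase is this last one, and this is precisely where the hypothesis $\genBy{\egb\gb} \neq \genBy{\gb}$ is used: by \cref{lem:equivalence}, $\genBy{\egb\gb} \neq \genBy{\gb}$ is equivalent to $\mmu_\pivot \notin \genBy{\lm{\ord}{\genBy{\egb\gb}}}$, which contradicts the conclusion of this subcase. Hence this third case cannot occur, completing the proof. The main obstacle is really a bookkeeping one, namely making sure the case analysis on divisibility by $\mmu_\pivot$ correctly separates the benign situation (proper multiples, handled by the $X_j\mmu_\pivot$'s in \cref{lem:leading}) from the critical equality $\lm{\ord}{\p} = \mmu_\pivot$ that \cref{lem:equivalence} is tailored to exclude; once this dichotomy is set up, the two lemmas do all the work.
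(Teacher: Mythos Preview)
Your proposal is correct and follows essentially the same approach as the paper's proof: both use \cref{lem:leading} to list the leading monomials of $\egb\gb$, use that $\gb$ is a $\ord$-Gr\"obner basis to force any leading monomial in $\genBy{\egb\gb}$ to be a multiple of some $\mmu_i$, and then invoke \cref{lem:equivalence} to rule out the single problematic case $\lm{\ord}{\p}=\mmu_\pivot$. The only cosmetic difference is that the paper packages the argument as a proof by contradiction (assume some $\h\in\genBy{\egb\gb}$ has $\lm{\ord}{\h}\notin\genBy{\lm{\ord}{\egb\gb}}$ and derive $\lm{\ord}{\h}=\mmu_\pivot$), whereas you argue the inclusion directly via a case split; the underlying divisibility reasoning is identical.
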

\begin{proof}
  Suppose by contradiction that \(\egb\gb\) is not a \(\ord\)-Gr\"obner basis.
  Then there exists a nonzero \(\h \in \genBy{\egb\gb}\) such that
  \(\lm{\ord}{\h} \not\in \genBy{\lm{\ord}{\egb\gb}}\), that is, by
  \cref{lem:leading}, \(\lm{\ord}{\h}\) is not divisible by any of the elements
  \(\mmu_i\) for \(i\neq\pivot\) and \(X_j \mmu_\pivot\) for \(1\le j\le
  \nvar\). On the other hand, \(\lm{\ord}{\h}\) is in
  \(\genBy{\lm{\ord}{\genBy{\egb\gb}}}\) and therefore in
  \(\genBy{\lm{\ord}{\gb}}\), hence \(\lm{\ord}{\h}\) is divisible by at least
  one \(\mmu_i\), \(1 \le i \le \gbdim\). These divisibility constraints lead
  to \(\lm{\ord}{\h} = \mmu_\pivot\), which implies \(\mmu_\pivot \in
  \genBy{\lm{\ord}{\genBy{\egb\gb}}}\). From \cref{lem:equivalence} one deduces
  \(\genBy{\egb\gb} = \genBy{\gb}\), which is absurd.
\end{proof}

\begin{corollary}
  \label{cor:mul_by_one}
  Assume that \(\genBy{\egb\gb} \neq \genBy{\gb}\) and that \(\gb\) is a
  minimal \(\ord\)-Gr\"obner basis. Let \(j_1 < \cdots < j_\ell\) be the
  indices \(j \in \{1,\ldots,\nvar\}\) such that \(X_j \mmu_\pivot \not\in
  \genBy{\mmu_i, i\neq \pivot}\). Then, the submatrix
  \begin{equation}
    \label{eqn:egb_pruned}
    \gbb =
    \begin{bmatrix}
      \ident{\pivot-1} & \mat{\lambda}_1 \\
                       & X_{j_1}-\alpha_{j_1} \\
                       & \vdots \\
                       & X_{j_\ell}-\alpha_{j_\ell} \\
                       & \mat{\lambda}_2      & \ident{\rdim-\pivot}
    \end{bmatrix}
    \in \pmatRing{(\gbdim+\ell-1)}{\gbdim}
  \end{equation}
  of \(\egb\) is such that \(\gbb\gb\) is a minimal \(\ord\)-Gr\"obner basis of
  \(\genBy{\egb\gb}\).
\end{corollary}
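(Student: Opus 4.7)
The plan is to leverage Lemmas \ref{lem:leading} and \ref{lem:mul_by_one} to first identify the Gr\"obner basis structure of $\egb\gb$, and then to remove redundant generators. Because $\gb$ is a minimal $\ord$-Gr\"obner basis, its leading monomials $\mmu_1,\ldots,\mmu_\gbdim$ are pairwise non-dividing and in particular pairwise distinct; thus the hypothesis $\mmu_i \neq \mmu_\pivot$ (for all $i\neq\pivot$) of both lemmas is satisfied. Combined with the assumption $\genBy{\egb\gb}\neq\genBy{\gb}$, Lemma \ref{lem:mul_by_one} gives that $\egb\gb$ is a $\ord$-Gr\"obner basis of $\genBy{\egb\gb}$, and Lemma \ref{lem:leading} provides the explicit list of leading monomials of its rows.

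Next, by the very definition of $j_1,\ldots,j_\ell$, each monomial $X_j \mmu_\pivot$ with $j\not\in\{j_1,\ldots,j_\ell\}$ lies in $\genBy{\mmu_i : i\neq\pivot}$. Together with the description of $\lm{\ord}{\egb\gb}$ from Lemma \ref{lem:leading}, this shows that the two sets $\{\mmu_i : i\neq\pivot\} \cup \{X_j \mmu_\pivot : 1\le j\le\nvar\}$ and $\{\mmu_i : i\neq\pivot\} \cup \{X_{j_k}\mmu_\pivot : 1\le k\le\ell\}$ generate the same monomial submodule, which equals $\genBy{\lm{\ord}{\genBy{\egb\gb}}}$. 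Now $\gbb$ is the submatrix of $\egb$ obtained by keeping only the rows $X_{j_k}-\alpha_{j_k}$ among the $\nvar$ rows of the form $X_j-\alpha_j$, so the rows of $\gbb\gb$ form the sub-list of rows of $\egb\gb$ whose leading monomials (identified by the same computation as in the proof of Lemma \ref{lem:leading}) are precisely the second set above. Hence $\genBy{\lm{\ord}{\gbb\gb}} = \genBy{\lm{\ord}{\genBy{\egb\gb}}}$, and since the rows of $\gbb\gb$ all lie in $\genBy{\egb\gb}$, I conclude that $\gbb\gb$ is a $\ord$-Gr\"obner basis of $\genBy{\egb\gb}$.

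It then only remains to verify minimality of $\gbb\gb$, i.e., that no element of $\lm{\ord}{\gbb\gb}$ divides another. I would check this by a case analysis on whether each of the two monomials belongs to $\{\mmu_i : i\neq\pivot\}$ or to $\{X_{j_k}\mmu_\pivot : 1\le k\le\ell\}$: a divisibility $\mmu_i \mid \mmu_{i'}$ for distinct $i,i'\neq\pivot$ is ruled out by the minimality of $\gb$; $\mmu_i \mid X_{j_k}\mmu_\pivot$ would contradict the defining property of $j_k$; $X_{j_k}\mmu_\pivot \mid \mmu_i$ would imply $\mmu_\pivot \mid \mmu_i$, again contradicting the minimality of $\gb$; and $X_{j_k}\mmu_\pivot \mid X_{j_{k'}}\mmu_\pivot$ with $k \neq k'$ would require $X_{j_k} \mid X_{j_{k'}}$, impossible for distinct variable indices. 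The main obstacle in the overall argument is the bookkeeping in the second step, namely matching the sub-list of rows of $\gbb\gb$ with their leading monomials as predicted by Lemma \ref{lem:leading}, to guarantee that the pruning from $\egb$ to $\gbb$ preserves the Gr\"obner basis structure.
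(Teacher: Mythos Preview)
Your proposal is correct and follows essentially the same approach as the paper: both use Lemma~\ref{lem:mul_by_one} (enabled by minimality of $\gb$) to get that $\egb\gb$ is a $\ord$-Gr\"obner basis, then Lemma~\ref{lem:leading} together with the definition of $j_1,\ldots,j_\ell$ to see that pruning to $\gbb\gb$ keeps the same leading-monomial module. The only organizational difference is that the paper first shows $\gbb\gb$ is a Gr\"obner basis of $\genBy{\gbb\gb}$ and then invokes \cite[Lem.\,15.5]{Eisenbud95} to obtain $\genBy{\gbb\gb}=\genBy{\egb\gb}$, whereas you go directly via the definition of a Gr\"obner basis of $\genBy{\egb\gb}$; also, you spell out the minimality case analysis that the paper leaves implicit.
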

\begin{proof}
  Since \(\gb\) is minimal, \(\mmu_i \neq \mmu_\pivot\) for all \(i \neq
  \pivot\); then \cref{lem:mul_by_one} ensures that \(\egb\gb\) is a
  \(\ord\)-Gr\"obner basis and \cref{lem:leading} gives
  \[
    \lm{\ord}{\gbb\gb} = (\mmu_1, \ldots, \mmu_{\pivot-1},
                           X_{j_1} \mmu_\pivot, \ldots, X_{j_\ell} \mmu_\pivot,
                           \mmu_{\pivot+1}, \ldots, \mmu_{\gbdim}).
  \]
  By construction of \(j_1, \ldots, j_\ell\), one has
  \(\genBy{\lm{\ord}{\gbb\gb}} = \genBy{\lm{\ord}{{\egb\gb}}}\), which implies
  \begin{align*}
    \genBy{\lm{\ord}{\genBy{\egb\gb}}} = \genBy{\lm{\ord}{{\egb\gb}}} & = \genBy{\lm{\ord}{{\gbb\gb}}} \\
                                       & \subset \genBy{\lm{\ord}{\genBy{\gbb\gb}}} \subset \genBy{\lm{\ord}{\genBy{\egb\gb}}}.
  \end{align*}
  Hence \(\genBy{\lm{\ord}{{\gbb\gb}}} = \genBy{\lm{\ord}{\genBy{\gbb\gb}}}\),
  and \(\gbb\gb\) is a minimal \(\ord\)-Gr\"obner basis. We conclude using
  \cite[Lem.\,15.5]{Eisenbud95}, which shows that \(\genBy{\gbb\gb} \subset
  \genBy{\egb\gb}\) and \(\genBy{\lm{\ord}{\genBy{\gbb\gb}}} =
  \genBy{\lm{\ord}{\genBy{\egb\gb}}}\) imply
  \(\genBy{\gbb\gb}=\genBy{\egb\gb}\).
\end{proof}

\begin{example}
  \label{eg:product_gb_counterex}
  Consider the case \(\ring=\field[X,Y]\) and \(\rdim=1\). Let \(\gb =
  [\begin{smallmatrix} X \\ Y+1 \end{smallmatrix}] \in \pmatRing{2}{1}\), which
  is the reduced \(\ord_1\)-Gr\"obner basis of \(\genBy{X, Y+1}\) for any
  monomial order \(\ord_1\) on \(\monom{\ring}\). Let also \(\egb \in
  \pmatRing{3}{2}\) whose rows are \((X\evec{1}, Y\evec{1}, \evec{2})\);
  according to \cref{thm:codim_one}, \(\egb\) is a reduced \(\ord_2\)-Gr\"obner
  basis for any monomial order \(\ord_2\) on \(\monom{\polMod{2}}\). Now, the
  product \(\egb\gb \in \pmatRing{3}{1}\) has entries \(X^2\), \(XY\), and
  \(Y+1\). Thus, \(\genBy{\lm{\ord_3}{\egb\gb}} = \genBy{X^2,XY,Y} =
  \genBy{X^2,Y}\) for any monomial order \(\ord_3\) on \(\monom{\ring}\). On
  the other hand, \(\genBy{\egb\gb}\) contains \(X = X(Y+1) - XY\), hence
  \(\genBy{\lm{\ord_3}{\egb\gb}} \neq \genBy{\lm{\ord_3}{\genBy{\egb\gb}}}\),
  which means that \(\egb\gb\) is not a \(\ord_3\)-Gr\"obner basis.
\end{example}

\subsection{Algorithm}
\label{sec:basecase:algo}

We now describe Algorithm~\algoname{Syzygy\_BaseCase}, which will serve as the
base case of the divide and conquer scheme.

\begin{algorithm}
  \caption{\algoname{Syzygy\_BaseCase}$(\lf,\G,\ord,\LM)$}
  \label{algo:basecase}
  \begin{algorithmic}[1]
    \Require{
      \Statex \textbullet~ a linear functional $\lf : \ring^\edim \to \field$,
      \Statex \textbullet~ a matrix \(\G\) in \(\pmatRing{\gbdim}{\edim}\) with rows \(\g_1,\ldots,\g_\gbdim \in \polMod{\edim}\),
      \Statex \textbullet~ a monomial order \(\ord\) on \(\ring^\rdim\), 
      \Statex \textbullet~ a list \(\LMinp = (\mmu_1, \ldots, \mmu_\gbdim)\) of elements of \(\monom{\ring^\rdim}\).
    }
    \Ensure{
      \Statex \textbullet~ a matrix \(\gbb\) in \(\pmatRing{(\gbdim+\ell-1)}{\gbdim}\) for some \(\ell \in \{0,\ldots,\nvar\}\),
      \Statex \textbullet~ a list \(\LM\) of \(\gbdim+\ell-1\) elements of \(\monom{\ring^\gbdim}\).
    }

    \State \((\val_1,\ldots,\val_\gbdim) \assign (\lf(\g_1),\ldots,\lf(\g_\gbdim)) \in \field^\gbdim\)
    \InlineIf{\((\val_1,\ldots,\val_\gbdim) = (0,\ldots,0)\)}{\Return \((\ident{\gbdim}, \LMinp)\)}
    \label{algo:basecase:step:trivial}
    \State \( \sord{\LMinp} { } \assign \algoname{SchreyerOrder}(\ord,\LMinp)\)
    \State \(\pivot \assign \argmin_{\sord{\LMinp}} \{\evec{i} \mid 1 \le i \le \gbdim, \val_i \neq 0\}\)
    \Comment{the index \(i\) such that \(\val_i \neq 0\) which minimizes \(\evec{i}\) with respect to \(\sord{\LMinp}\)}
    \label{algo:basecase:step:pivot}
    \State \(\{j_1 < \cdots < j_\ell\} \assign \{j \in \{1,\ldots,\nvar\} \mid X_j \mmu_\pivot \not\in \genBy{\mmu_i, i \neq \pivot} \}\)
    \State \(\alpha_{j_s} \assign \lf({X_{j_s}\g_\pivot}) / \val_\pivot\) for \(1 \le s \le \ell\)
          \label{algo:basecase:step:alphas}
    \State \(\lambda_i \assign -\val_i/\val_\pivot\) for \(1 \le i < \pivot\) and \(\pivot < i \le \gbdim\)
          \label{algo:basecase:step:lambdas}
    \State \(\gbb \assign\) matrix in \(\pmatRing{(\gbdim+\ell-1)}{\gbdim}\) as
    in \cref{eqn:egb_pruned}
    \label{algo:basecase:step:Q}
    \State $\LM \assign (\mmu_{1},\ldots,\mmu_{\pivot-1},X_{j_1}\mmu_{\pivot},\ldots,X_{j_\ell}\mmu_{\pivot},\mmu_{\pivot+1},\ldots,\mmu_{\gbdim})$
    \label{algo:basecase:step:LM}
    \State \Return $(\gbb, \LM)$
  \end{algorithmic}
\end{algorithm}

\begin{theorem}
  \label{thm:algo:basecase}
  Let \(\nodule \subset \polMod{\edim}\) be an \(\ring\)-submodule, let \(\F
  \in \pmatRing{\rdim}{\edim}\), and let \(\gb \in \pmatRing{\gbdim}{\rdim}\)
  be a minimal \(\ord\)-Gr\"obner basis of \(\syzmod{\nodule}{\F}\) for some
  monomial order \(\ord\) on \(\ring^\rdim\). Assume that the input of
  \cref{algo:basecase} is such that \(\ker(\lf) \cap \nodule\) is an
  \(\ring\)-module, \(\G = \gb \F\), and \(\lm{\ord}{\gb} =
  (\mmu_1,\ldots,\mmu_\gbdim)\). Then \cref{algo:basecase} returns \((\gbb,
  \LM)\) such that \(\gbb\gb\) is a minimal \(\ord\)-Gr\"obner basis of
  \(\syzmod{\ker(\lf)\cap\nodule}{\F}\) and \(\LM=\lm{\ord}{\gbb\gb}\).
\end{theorem}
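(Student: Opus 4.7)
The plan is to split into a trivial and a non-trivial case and then reduce the non-trivial case to \cref{cor:mul_by_one}. Set \(\module = \syzmod{\nodule}{\F}\) and \(\module' = \syzmod{\ker(\lf)\cap\nodule}{\F}\), so that \(\module' \subset \module\) and \(\gb\) generates \(\module\). In the trivial case where every \(\val_i\) vanishes, each row \(\g_i = \p_i\F\) of \(\G\) already lies in \(\ker(\lf)\cap\nodule\), whence \(\module = \module'\) and the output \((\ident{\gbdim},\LMinp)\) correctly presents \(\gb\) itself as a minimal \(\ord\)-Gr\"obner basis of \(\module'\) with unchanged leading monomials.

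For the non-trivial case, I would introduce the auxiliary syzygy module \(\mathcal{S} := \syzmod{\ker(\lf)\cap\nodule}{\G} \subset \polMod{\gbdim}\), which is an \(\ring\)-submodule thanks to the hypothesis that \(\ker(\lf)\cap\nodule\) is one. The map \(\phi\colon \polMod{\gbdim} \to \nodule/(\ker(\lf)\cap\nodule)\), \(\q \mapsto \q\G \bmod (\ker(\lf)\cap\nodule)\), is then \(\ring\)-linear with kernel \(\mathcal{S}\); composing with the injection \(\nodule/(\ker(\lf)\cap\nodule) \hookrightarrow \field\) induced by \(\lf\) yields \(\phi(\evec{i}) = \val_i\). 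Since some \(\val_i\) is nonzero, this injection is onto, so \(\polMod{\gbdim}/\mathcal{S}\) has \(\field\)-dimension \(1\), and its \(\ring\)-action is given by evaluation at a single point \((\alpha_1,\ldots,\alpha_\nvar) \in \field^\nvar\).

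Next I would apply \cref{thm:codim_one} to \(\mathcal{S}\) with the Schreyer order \(\sord{\LMinp}\): its reduced \(\sord{\LMinp}\)-Gr\"obner basis \(\egb\) has the shape \cref{eqn:egb}, for a pivot index \(\pivot\) uniquely characterized by \(\evec{\pivot} \notin \lm{\sord{\LMinp}}{\mathcal{S}}\), and with \(\lambda_i = 0\) whenever \(\evec{i} \sordneq{\LMinp} \evec{\pivot}\). A short direct argument, combining this characterization with \(\phi(\evec{i}) = \val_i\) and \(\phi(X_j\evec{\pivot}) = \lf(X_j\g_\pivot)\), then shows: (i) \(\pivot\) is the \(\sord{\LMinp}\)-smallest index with \(\val_\pivot \neq 0\), for any smaller \(\evec{i}\) with \(\val_i \neq 0\) would yield \(\evec{i}-(\val_i/\val_\pivot)\evec{\pivot}\in\mathcal{S}\) of leading monomial \(\evec{\pivot}\), a contradiction; (ii) \(\lambda_i = -\val_i/\val_\pivot\) from \(\phi(\evec{i}+\lambda_i\evec{\pivot}) = 0\); and (iii) \(\alpha_j = \lf(X_j\g_\pivot)/\val_\pivot\) from \(\phi((X_j-\alpha_j)\evec{\pivot}) = 0\). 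These are precisely the assignments of \cref{algo:basecase:step:pivot}, \cref{algo:basecase:step:lambdas}, and \cref{algo:basecase:step:alphas}.

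To conclude, since \(\gb\) surjects \(\polMod{\gbdim}\) onto \(\module\) and \(\module' \subset \module\), one has \(\module' = \mathcal{S}\gb = \genBy{\egb}\gb = \genBy{\egb\gb}\); moreover \(\module' \neq \module = \genBy{\gb}\) because some \(\val_i \neq 0\), so \cref{cor:mul_by_one} applies with the minimal \(\ord\)-Gr\"obner basis \(\gb\). That corollary produces exactly the pruned submatrix of \cref{eqn:egb_pruned} built in \cref{algo:basecase:step:Q} and certifies both that \(\gbb\gb\) is a minimal \(\ord\)-Gr\"obner basis of \(\module'\) and that its list of leading monomials matches \(\LM\) from \cref{algo:basecase:step:LM}. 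The most delicate step is the third paragraph, where one must match the combinatorial characterization of the pivot from \cref{thm:codim_one} with the algorithm's \(\argmin\)-based choice and read off the concrete formulas for \(\lambda_i\) and \(\alpha_j\); once this identification is established, everything else is a direct invocation of \cref{cor:mul_by_one} and the structural results of \cref{sec:codim_one:gb}.
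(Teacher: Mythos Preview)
Your proof is correct and uses the same ingredients as the paper's---\cref{thm:codim_one} to identify the elementary basis \(\egb\) with the reduced \(\sord{\LMinp}\)-Gr\"obner basis of \(\mathcal{S}=\syzmod{\ker(\lf)\cap\nodule}{\G}\), and \cref{cor:mul_by_one} to conclude---but you traverse them in the opposite direction. The paper \emph{defines} \(\egb\) from the numbers the algorithm computes, invokes the converse part of \cref{thm:codim_one} to certify it is a reduced Gr\"obner basis of some codimension-one module, and then checks by a direct two-inclusion argument that this module is \(\mathcal{S}\). You instead first establish that \(\mathcal{S}\) has codimension one via the linear map \(\phi\), apply the forward part of \cref{thm:codim_one} to obtain its reduced Gr\"obner basis \(\egb\) abstractly, and then \emph{derive} the formulas for \(\pivot\), \(\lambda_i\), \(\alpha_j\) by reading off \(\phi\) on the rows of \(\egb\). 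Your route explains \emph{why} the algorithm's assignments are the right ones (they are forced by the structure of \(\mathcal{S}\)); the paper's route is a shorter straight-line verification. Both arrive at \(\genBy{\egb\gb}=\module'\neq\module\) and finish identically through \cref{cor:mul_by_one}.
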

\begin{proof}
  If \((\lf(\g_1),\ldots,\lf(\g_\gbdim)) = (0,\ldots,0)\), then
  \cref{algo:basecase} stops at \cref{algo:basecase:step:trivial} and returns
  \(\gbb=\ident{\gbdim}\) and \(\LMinp\). Thus \(\gbb\gb=\gb\), hence by
  assumption \(\LM = \LMinp = \lm{\ord}{\gb} = \lm{\ord}{\gbb\gb}\), and
  \(\gbb\gb\) is a minimal \(\ord\)-Gr\"obner basis of
  \(\syzmod{\nodule}{\F}\); besides, the identity \(\syzmod{\nodule}{\F} =
  \syzmod{\ker(\lf)\cap\nodule}{\F}\) is easily deduced from
  \((\lf(\g_1),\ldots,\lf(\g_\gbdim)) = (0,\ldots,0)\).

  In the rest of the proof, assume \((\lf(\g_1),\ldots,\lf(\g_\gbdim)) \neq
  (0,\ldots,0)\). Define \(\egb \in \pmatRing{(\gbdim+\nvar-1)}{\gbdim}\) as in
  \cref{eqn:egb} with \(\pivot\) and \(\lambda_i\) as in \cref{algo:basecase}
  and \(\alpha_j = \lf(X_j\g_\pivot)/\val_\pivot\) for \(1 \le j \le \nvar\);
  in particular, \(\gbb\) computed at \cref{algo:basecase:step:Q} is formed by
  a subset of the rows of \(\egb\). 

  First, \(\egb\) is a \(\sord{\LMinp}\)-Gr\"obner basis according to
  \cref{thm:codim_one}, since by definition of \(\pivot\) and \(\lambda_i\) one
  gets the implications \(\evec{i} \sord{\LMinp} \evec{\pivot} \Rightarrow
  \val_i = 0 \Rightarrow \lambda_i = 0\), for \(i \neq \pivot\).

  Next, we claim that \(\genBy{\egb} = \syzmod{\ker(\lf)\cap\nodule}{\G}\).
  Indeed, the rows of \(\gb\F\) are in \(\nodule\), and thus so are those of
  \(\egb\G=\egb\gb\F\). Moreover, by choice of \(\pivot\) and \(\lambda_i\) the
  rows of \(\egb\G\) are in \(\ker(\lf)\), since for \(i \neq \pivot\) one has
  \(\lf((\p_i + \lambda_i\p_\pivot)\F) = \lf(\g_i + \lambda_i\g_\pivot) =
  \val_i + \lambda_i\val_\pivot = 0\) and for \(1 \le j \le \nvar\) one has
  \(\lf((X_{j}-\alpha_{j}) \p_{\pivot}\F) = \lf((X_{j}-\alpha_{j}) \g_{\pivot})
  = \lf(X_{j}\g_{\pivot})-\alpha_{j}\val_{\pivot} = 0\).  Therefore the rows of
  \(\egb\G\) are in \(\ker(\lf) \cap \nodule\), that is, \(\genBy{\egb} \subset
  \syzmod{\ker(\lf)\cap\nodule}{\G}\). To prove the reverse inclusion, recall
  from \cref{thm:codim_one} that \(\genBy{\egb}\) has codimension \(1\) in
  \(\polMod{\gbdim}\) and hence \(\syzmod{\ker(\lf)\cap\nodule}{\G}\) is either
  \(\genBy{\egb}\) or \(\polMod{\gbdim}\). Since 
  \[
  0 \neq \val_\pivot = \lf(\g_\pivot) = \lf(\p_\pivot\F) = \lf(\evec{\pivot}\gb\F) = \lf(\evec{\pivot}\G)
  \]
  one has that \(\evec{\pivot} \not\in \syzmod{\ker(\lf)\cap\nodule}{\G}\),
  hence \(\syzmod{\ker(\lf)\cap\nodule}{\G} = \genBy{\egb}\).

  It follows that \(\genBy{\egb\gb} = \syzmod{\ker(\lf) \cap \nodule}{\F}\).
  Indeed, the rows of \(\egb\gb\F\) are in \(\ker(\lf) \cap \nodule\) as noted
  above, and thus \(\genBy{\egb\gb} \subset \syzmod{\ker(\lf) \cap
  \nodule}{\F}\). Now let \(\p \in \syzmod{\ker(\lf) \cap \nodule}{\F}\); thus
  in particular \(\p \in \syzmod{\nodule}{\F}\), and \(\p = \q\gb\) for some
  \(\q \in \polMod{\gbdim}\). Then \(\p\F = \q\gb\F = \q\G \in \ker(\lf) \cap
  \nodule\), hence \(\q \in \syzmod{\ker(\lf)\cap\nodule}{\G} = \genBy{\egb}\),
  and therefore \(\p \in \genBy{\egb\gb}\).

  Now, \(\lf(\p_\pivot\F) \neq 0\) implies \(\p_\pivot \not\in
  \syzmod{\ker(\lf)\cap \nodule}{\F} = \genBy{\egb\gb}\). Thus
  \cref{lem:equivalence} ensures \(\genBy{\egb\gb} \neq \genBy{\gb}\), and
  finally \cref{cor:mul_by_one} states that \({\gbb\gb}\) is a minimal
  \(\ord\)-Gr\"obner basis of \(\genBy{\egb\gb} =
  \syzmod{\ker(\lf)\cap\nodule}{\F}\). Besides \cref{lem:leading} yields
  \(\lm{\ord}{\gbb\gb} = \lm{\sord{\LMinp}}{\gbb}\LMinp = \LM\).
\end{proof}

\section{Divide and conquer algorithm}
\label{sec:dac}

Repeating the basic procedure described in \cref{sec:basecase:algo}
iteratively, we obtain an algorithm for syzygy basis computation when
\(\nodule\) is an intersection of kernels of linear functionals with a specific
property (see \cref{hyp:kernel_lf_module}). This algorithm is similar to
\cite[Algo.\,2]{MaMoMo93} and \cite[Algo.\,3.2]{OKeeFit02}, apart from
differences in the input description. Here, the input consists of linear
functionals \(\lf_1, \ldots, \lf_\vsdim : \polMod{\edim} \to \field\), with the
assumption that
\begin{equation}
  \label{hyp:kernel_lf_module}
  \nodule_i = \displaystyle\cap_{1\le j\le i} \ker(\lf_j) \text{ is an } \ring\text{-module for } 1 \le i \le \vsdim.
\end{equation}
Then we consider the \(\ring\)-module \(\nodule = \nodule_\vsdim = \cap_{1\le
j\le \vsdim} \ker(\lf_j)\), which is such that \(\polMod{\edim}/\nodule\) has
dimension at most \(\vsdim\) as a \(\field\)-vector space. For \(\F\) in
\(\pmatRing{\rdim}{\edim}\), the following algorithm computes a minimal
\(\ord\)-Gr\"obner basis of the syzygy module \(\syzmod{\nodule}{\F}\). Note
that we do not specify the representation of \(\F\) since it may depend on the
specific functionals \(\lf_i\); typically, one considers \(\F\) to be known
modulo \(\nodule\), via the images of its rows by the functionals \(\lf_i\).

\begin{algorithm}
  \caption{\algoname{Syzygy\_Iter}$(\lf_1,\ldots,\lf_\vsdim,\F,\ord)$}
  \label{algo:iter}
  \begin{algorithmic}[1]
    \Require{
      \Statex \textbullet~ linear functionals $\lf_1,\ldots,\lf_\vsdim : \ring^\edim \to \field$ such that \cref{hyp:kernel_lf_module},
      \Statex \textbullet~ a matrix \(\F\) in \(\pmatRing{\rdim}{\edim}\),
      \Statex \textbullet~ a monomial order \(\ord\) on \(\ring^\rdim\).
     }
    \Ensure{
      \Statex \textbullet~ a minimal \(\ord\)-Gr\"obner basis \(\gb \in \pmatRing{\gbdim}{\rdim}\) of \(\syzmod{\nodule}{\F}\).
    }

    \State \(\gb \assign \ident{\rdim} \in \pmatRing{\rdim}{\rdim}\); \(\G \assign \F\); \(\LM \assign (\evec{1},\ldots,\evec{\rdim}) = \lm{\ord}{\gb}\)
    \label{algo:iter:step:base}
    \For{$i=1,\ldots,\vsdim$} \label{algo:iter:for}
    \State \((\gbb,\LM) \assign \Call{Syzygy\_BaseCase}{\lf_i,\G,\ord,\LM}\)
    \State \(\gb \assign \gbb \gb\); \(\G \assign \gbb \G\)
    \EndFor
    \State\Return{$\gb$}
  \end{algorithmic}
\end{algorithm}

\begin{corollary}
  \label{cor:algo:iter}
  At the end of the \(i\)th iteration of \cref{algo:iter}, \(\gb\) is a minimal
  \(\ord\)-Gr\"obner basis of \(\syzmod{\nodule_i}{\F}\), and one has \(\G =
  \gb \F\) as well as \(\LM = \lm{\ord}{\gb}\). In particular, \cref{algo:iter}
  is correct.
\end{corollary}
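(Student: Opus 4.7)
The plan is to proceed by induction on the loop index, using \cref{thm:algo:basecase} as the engine of the inductive step. Denote by $\gb^{(i)}, \G^{(i)}, \LM^{(i)}$ the values of the variables $\gb, \G, \LM$ at the end of the $i$th iteration, with $i=0$ corresponding to the state just after \cref{algo:iter:step:base}, before entering the loop.

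For the base case $i=0$, I would observe that $\nodule_0 = \cap_{1 \le j \le 0} \ker(\lf_j)$ is an empty intersection, hence equal to $\polMod{\edim}$, so $\syzmod{\nodule_0}{\F} = \polMod{\rdim}$. The initialization gives $\gb^{(0)} = \ident{\rdim}$, which is obviously a minimal $\ord$-Gröbner basis of $\polMod{\rdim}$ for any monomial order, and the identities $\G^{(0)} = \ident{\rdim}\F = \F$ and $\LM^{(0)} = (\evec{1},\ldots,\evec{\rdim}) = \lm{\ord}{\ident{\rdim}}$ hold by construction.

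For the inductive step, assume the invariants hold at the end of iteration $i-1$: $\gb^{(i-1)}$ is a minimal $\ord$-Gröbner basis of $\syzmod{\nodule_{i-1}}{\F}$, $\G^{(i-1)} = \gb^{(i-1)}\F$, and $\LM^{(i-1)} = \lm{\ord}{\gb^{(i-1)}}$. The $i$th iteration calls \algoname{Syzygy\_BaseCase} with inputs $\lf_i$, $\G^{(i-1)}$, $\ord$, and $\LM^{(i-1)}$. By assumption~\cref{hyp:kernel_lf_module}, $\ker(\lf_i) \cap \nodule_{i-1} = \nodule_i$ is an $\ring$-module, so the hypotheses of \cref{thm:algo:basecase} are satisfied (with $\nodule = \nodule_{i-1}$ and $\gb = \gb^{(i-1)}$). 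That theorem therefore yields $(\gbb, \LM^{(i)})$ such that $\gbb\gb^{(i-1)}$ is a minimal $\ord$-Gröbner basis of $\syzmod{\nodule_i}{\F}$ and $\LM^{(i)} = \lm{\ord}{\gbb\gb^{(i-1)}}$. The algorithm then sets $\gb^{(i)} = \gbb\gb^{(i-1)}$ and $\G^{(i)} = \gbb\G^{(i-1)} = \gbb\gb^{(i-1)}\F = \gb^{(i)}\F$, so all three invariants propagate to iteration $i$.

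Applying this for $i = \vsdim$, the returned matrix $\gb = \gb^{(\vsdim)}$ is a minimal $\ord$-Gröbner basis of $\syzmod{\nodule_\vsdim}{\F} = \syzmod{\nodule}{\F}$, which is the correctness claim. There is no real obstacle here; the only subtle point is identifying the empty intersection $\nodule_0$ with the ambient module $\polMod{\edim}$ so that the base case of the induction makes sense, after which the inductive step is a direct translation of \cref{thm:algo:basecase}.
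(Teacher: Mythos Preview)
Your proof is correct and follows essentially the same approach as the paper: induction on the loop index, with the base case handled by identifying \(\nodule_0 = \polMod{\edim}\) and the inductive step driven directly by \cref{thm:algo:basecase}. Your indexing (tracking values at the \emph{end} of iteration \(i\), with \(i=0\) for the initialization) is in fact slightly cleaner than the paper's phrasing, but the substance is identical.
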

\begin{proof}
  Note that at \cref{algo:iter:step:base} of \cref{algo:iter}, \(\gb = \ident{\rdim}\) is the
  reduced \(\ord\)-Gr\"obner basis of \(\polMod{\rdim} = \syzmod{\nodule_0}{\F}\)
  with \(\nodule_0 = \polMod{\edim}\), and both \(\G = \gb\F = \F\) and \(\LM =
  (\evec{1},\ldots,\evec{\rdim}) = \lm{\ord}{\gb}\) hold. We conclude that if
  \(\vsdim = 0\), \cref{algo:iter} is correct.

  The rest of the proof is by induction on \(\vsdim\). 
  We claim that the properties in the statement are preserved across the \(\vsdim\)
  iterations. Precisely, we assume that at the beginning of the \(i\)th iteration,
  \(\gb\) is a minimal \(\ord\)-Gr\"obner basis of \(\syzmod{\nodule_i}{\F}\),
  \(\G = \gb \F\), and \(\LM = \lm{\ord}{\gb}\).

  Since \(\nodule_{i+1} = \ker(\lf_{i+1}) \cap
  \nodule_i\) is an \(\ring\)-module, applying \cref{thm:algo:basecase} shows
  that \((\gbb,\LM)\) computed during the iteration are such that \(\LM =
  \lm{\ord}{\gbb\gb}\) and that \(\gbb\gb\) is a minimal \(\ord\)-Gr\"obner
  basis of \(\syzmod{\nodule_{i+1}}{\F}\).
\end{proof}

This allows us to deduce bounds on the size of a minimal \(\ord\)-Gr\"obner
basis of \(\syzmod{\nodule}{\F}\).

\begin{lemma}
  \label{lem:general:size_gb}
  Let \(\gb \in \pmatRing{\gbdim}{\rdim}\) be the output of \cref{algo:iter}.
  Then, \(\rdim \le \gbdim \le \rdim + (\nvar-1)\vsdim\), and thus the same
  holds for any minimal \(\ord\)-Gr\"obner basis of \(\syzmod{\nodule}{\F}\).
  Furthermore, at the end of the iteration \(i\) of \cref{algo:iter}, the
  basis \(\gbb\) has at most \(\gbdim + \vsdim - i\) elements.
\end{lemma}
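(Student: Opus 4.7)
The plan is to track the size of the running basis $\gb$ across the $\vsdim$ iterations of \cref{algo:iter} and to relate the per-iteration size change to the precise shape of the base-case output given by \cref{eqn:egb_pruned}. I denote by $\gbdim_i$ the number of rows of $\gb$ at the end of iteration $i$, so that $\gbdim_0 = \rdim$ and $\gbdim_\vsdim = \gbdim$.

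First, I would establish the per-step estimate
\[
  -1 \;\le\; \gbdim_i - \gbdim_{i-1} \;\le\; \nvar - 1
\]
by case analysis on \cref{algo:basecase}. In the trivial branch (\cref{algo:basecase:step:trivial}), the returned matrix is the identity $\ident{\gbdim_{i-1}}$ and the size is unchanged; in the non-trivial branch, reading off \cref{eqn:egb_pruned} gives $\gbdim_i = \gbdim_{i-1} + \ell_i - 1$, where $\ell_i \in \{0, 1, \ldots, \nvar\}$ is the number of indices $j$ such that $X_j \mmu_\pivot \not\in \genBy{\mmu_i, i \neq \pivot}$. Both branches fit into the displayed interval.

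Next I would combine this per-step estimate in two complementary ways. Summing $\gbdim_i - \gbdim_{i-1} \le \nvar - 1$ for $i = 1, \ldots, \vsdim$ yields $\gbdim \le \rdim + (\nvar-1)\vsdim$. Telescoping the lower bound $\gbdim_j - \gbdim_{j-1} \ge -1$ from $j = i+1$ up to $\vsdim$ yields $\gbdim - \gbdim_i \ge -(\vsdim - i)$, i.e., the intermediate bound $\gbdim_i \le \gbdim + \vsdim - i$. By \cref{cor:algo:iter} the final output is a minimal $\ord$-Gr\"obner basis of $\syzmod{\nodule}{\F}$, and since any two minimal $\ord$-Gr\"obner bases share the same set of leading monomials and hence the same cardinality, these bounds transfer to every minimal $\ord$-Gr\"obner basis of $\syzmod{\nodule}{\F}$.

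For the lower bound $\rdim \le \gbdim$, I would invoke the finite-dimensionality of $\polMod{\rdim}/\syzmod{\nodule}{\F}$ recalled in the Introduction: for each support $i \in \{1, \ldots, \rdim\}$, the leading module must contain some monomial of the form $\nu\evec{i}$, since otherwise infinitely many monomials of support $i$ would lie outside the leading module and the quotient would be infinite-dimensional. The leading monomials of a minimal Gr\"obner basis form a minimal monomial generating set of the leading module, so at least one basis element must have support $i$ for each $i$, giving $\gbdim \ge \rdim$. The only step I expect to require genuine care is the uniform handling of the trivial and non-trivial branches of \cref{algo:basecase} so that the per-iteration bound holds cleanly in both; the rest reduces to telescoping and a standard dimension-counting observation on leading modules.
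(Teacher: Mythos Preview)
Your proof is correct and follows the same approach as the paper's: track the per-iteration change in the row count (which lies in $\{-1,\ldots,\nvar-1\}$ by the shape of the base-case output), sum forward for the upper bound, telescope backward for the intermediate bound on $\gbb$, and invoke finite-dimensionality of $\polMod{\rdim}/\syzmod{\nodule}{\F}$ for the lower bound. The paper's version is simply terser; your elaboration of the lower bound (producing a leading monomial of each support $i$) and of why the row count of $\gbb$ at iteration $i$ coincides with $\gbdim_i$ makes the argument more self-contained but adds nothing structurally new.
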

\begin{proof}
  Remark that all minimal \(\ord\)-Gr\"obner bases of the same module have the
  same number of rows. Before the first iteration, the basis is
  \(\ident{\rdim}\) which has \(\rdim\) rows, and each iteration of the for
  loop adds \(\ell-1\) rows to the basis for some \(\ell\) in
  \(\{0,\ldots,\nvar\}\). Therefore \(\gbdim \le \rdim + (\nvar-1)\vsdim\), and
  the last claim follows from \(\ell-1\ge-1\). The lower bound \(\rdim \le
  \gbdim\) comes from the fact that \(\polMod{\rdim} / \syzmod{\nodule}{\F}\)
  has finite dimension as a \(\field\)-vector space.
\end{proof}

This iterative algorithm can be turned into a divide and conquer one
(\cref{algo:dac}), by reorganizing how the products are performed. It computes
a minimal \(\ord\)-Gr\"obner basis of \(\syzmod{\nodule}{\F}\), if one takes as
input \(\G=\F\) and \(\LMinp = (\evec{1},\ldots,\evec{\rdim})\).

\begin{algorithm}
  \caption{\algoname{Syzygy\_DaC}$(\lf_1,\ldots,\lf_\vsdim,\G,\ord,\LMinp)$}
  \label{algo:dac}
  \begin{algorithmic}[1]
    \Require{
      \Statex \textbullet~ linear functionals $\lf_1,\ldots,\lf_\vsdim : \ring^\edim \to \field$,
      \Statex \textbullet~ a matrix \(\G\) in \(\pmatRing{\gbdim}{\edim}\),
      \Statex \textbullet~ a monomial order \(\ord\) on \(\ring^\rdim\),
      \Statex \textbullet~ a list \(\LMinp = (\mmu_1, \ldots, \mmu_\gbdim)\) of elements of \(\monom{\ring^\rdim}\).
     }
    \Ensure{
      \Statex \textbullet~ a matrix \(\gbb\) in \(\pmatRing{\gbbdim}{\rdim}\) for some \(\gbbdim \ge 0\),
      \Statex \textbullet~ a list \(\LM\) of \(\gbbdim\) elements of \(\monom{\ring^\rdim}\).
    }
    \InlineIf{\(\vsdim=1\)}{\Return \Call{Syzygy\_BaseCase}{$\lf_i,\G,\ord,\LMinp$}}
    \State \((\gbb_1,\LM_1) \assign \algoname{Syzygy\_DaC}(\lf_1,\ldots,\lf_{\lfloor \vsdim/2 \rfloor},\G,\ord,\LMinp)\)
    \State \((\gbb_2,\LM_2) \assign \algoname{Syzygy\_DaC}(\lf_{\lfloor \vsdim/2 \rfloor+1},\ldots,\lf_\vsdim,\gbb_1\G,\ord,\LM_1)\)
    \State \Return \((\gbb_2 \gbb_1, \LM_2)\)
  \end{algorithmic}
\end{algorithm}

\begin{theorem}
  \label{thm:algo_general:dac}
  Let \(\nodule \subset \polMod{\edim}\) be an \(\ring\)-submodule, let \(\F
  \in \pmatRing{\rdim}{\edim}\), and let \(\gb \in \pmatRing{\gbdim}{\rdim}\)
  be a minimal \(\ord\)-Gr\"obner basis of \(\syzmod{\nodule}{\F}\) for some
  monomial order \(\ord\) on \(\ring^\rdim\). Assume that the input of
  \cref{algo:dac} is such that \(\G = \gb \F\), and \(\lm{\ord}{\gb} =
  (\mmu_1,\ldots,\mmu_\gbdim)\), and
  \begin{equation}
    \label{hyp:kernel_lfs_module}
    \nodule_i \cap \nodule \text{ is an } \ring\text{-module for } 1 \le i \le \vsdim,
  \end{equation}
  where \(\nodule_i = \cap_{1\le j\le i} \ker(\lf_j)\). Then \cref{algo:dac}
  outputs \((\gbb,\LM)\) such that \(\gbb\gb\) is a minimal \(\ord\)-Gr\"obner
  basis of \(\syzmod{\nodule_\vsdim\cap\nodule}{\F}\) and
  \(\LM=\lm{\ord}{\gbb\gb}\).
\end{theorem}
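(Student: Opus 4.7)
The plan is to proceed by induction on $\vsdim$. The base case $\vsdim = 1$ is immediate: in this case the algorithm simply returns $\algoname{Syzygy\_BaseCase}(\lf_1, \G, \ord, \LMinp)$, and the hypotheses $\ker(\lf_1) \cap \nodule = \nodule_1 \cap \nodule$ being an $\ring$-module (from \cref{hyp:kernel_lfs_module} with $i=1$), $\G = \gb \F$, and $\lm{\ord}{\gb} = \LMinp$ are exactly those required by \cref{thm:algo:basecase}, whose conclusion gives precisely the claim here.

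For the inductive step, I set $\vsdim_1 = \lfloor \vsdim/2 \rfloor$, $\vsdim_2 = \vsdim - \vsdim_1$, and $\nodule'_i = \cap_{1 \le j \le i} \ker(\lf_{\vsdim_1 + j})$ for $1 \le i \le \vsdim_2$. The first recursive call is on the functionals $\lf_1, \ldots, \lf_{\vsdim_1}$ with input triple $(\G, \ord, \LMinp)$. The required intersection hypothesis (that $\nodule_i \cap \nodule$ be an $\ring$-module for $1 \le i \le \vsdim_1$) is a special case of \cref{hyp:kernel_lfs_module}. By the inductive hypothesis, $\gbb_1 \gb$ is then a minimal $\ord$-Gr\"obner basis of $\syzmod{\nodule_{\vsdim_1} \cap \nodule}{\F}$ with $\LM_1 = \lm{\ord}{\gbb_1 \gb}$.

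For the second recursive call, I would view $\nodule_{\vsdim_1} \cap \nodule$ as the new ambient module and $\gbb_1 \gb$ as the new input Gr\"obner basis, with input matrix $\gbb_1 \G = (\gbb_1 \gb) \F$ and leading-monomial list $\LM_1$. The key algebraic identity is $\nodule'_i \cap \nodule_{\vsdim_1} = \nodule_{\vsdim_1 + i}$, from which I deduce $\nodule'_i \cap (\nodule_{\vsdim_1} \cap \nodule) = \nodule_{\vsdim_1 + i} \cap \nodule$, an $\ring$-module by \cref{hyp:kernel_lfs_module}. The inductive hypothesis applied to this call then yields that $\gbb_2 (\gbb_1 \gb) = (\gbb_2 \gbb_1) \gb$ is a minimal $\ord$-Gr\"obner basis of $\syzmod{\nodule'_{\vsdim_2} \cap \nodule_{\vsdim_1} \cap \nodule}{\F} = \syzmod{\nodule_\vsdim \cap \nodule}{\F}$ with $\LM_2 = \lm{\ord}{(\gbb_2 \gbb_1) \gb}$, which is exactly what the algorithm returns.

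The main difficulty, beyond the recursive bookkeeping, is verifying that \cref{hyp:kernel_lfs_module} is preserved when descending into the right half: one must reinterpret $\lf_{\vsdim_1+1}, \ldots, \lf_\vsdim$ relative to the refined ambient module $\nodule_{\vsdim_1} \cap \nodule$ rather than $\nodule$, and the identity $\nodule'_i \cap \nodule_{\vsdim_1} = \nodule_{\vsdim_1+i}$ is precisely what makes this reinterpretation consistent with the global hypothesis. Once this translation is in place, the remaining steps are formal: associativity of matrix products, together with the two inductive conclusions, directly assembles the claim on $(\gbb_2 \gbb_1, \LM_2)$.
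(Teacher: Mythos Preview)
Your proposal is correct and follows essentially the same approach as the paper's proof: induction on $\vsdim$, with the base case handled by \cref{thm:algo:basecase}, and the inductive step verifying that the module hypothesis \cref{hyp:kernel_lfs_module} transfers to the second recursive call via the identity $\nodule'_i \cap \nodule_{\vsdim_1} = \nodule_{\vsdim_1+i}$ (the paper writes this with $\module[K]_i$ in place of your $\nodule'_i$, with a shifted index). The only cosmetic difference is notation and indexing.
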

\begin{proof}
  If \(\vsdim = 1\) the output returned by \cref{algo:basecase} is correct, since by
  \cref{thm:algo:basecase}, \(\gbb\gb\) is a minimal \(\ord\)-Gr\"obner basis of                                                                                           \(\syzmod{\ker(\lf_1)\cap\nodule}{\F}\) and \(\LM=\lm{\ord}{\gbb\gb}\). We assume
  by induction hypothesis that \cref{algo:dac} returns the output foreseen by
  \cref{thm:algo_general:dac} when the number of input linear functionals is \(<\vsdim\),
  and when the assumptions of the theorem are satisfied.
  
  By such a hypothesis, since \(\G = \gb\F\) and \(\LMinp = \lm{\ord}{\gb}\), one
  deduces that \((\gbb_1, \LM_1)\) are such that \(\gbb_1\gb\) is a \(\ord\)-Gr\"obner 
  basis of \(\syzmod{\module[M]}{\F}\), with \(\module[M] = \nodule_{\lfloor \vsdim/2 \rfloor} \cap \nodule\),
  and \(\LM_1 = \lm{\ord}{\gbb_1\gb}\).

  Let \(\module[K]_i = \cap_{\lfloor \vsdim/2\rfloor + 1 \leq j \leq i} \ker(\lf_j)\),
  for each \(i=\lfloor D/2\rfloor+1, \ldots, \vsdim\). By hypothesis \(\module[K]_i \cap
  \module[M] = \nodule_i \cap \nodule\) is a module, for \(i=\lfloor \vsdim/2\rfloor+1\),
  \ldots, \(i=\vsdim\). Since \(\gbb_1\G = \gbb_1\gb\F\) and \(\gbb_1\gb\) is a \(\ord\)-Gr\"obner
  basis of \(\syzmod{\module[M]}{\F}\), and \(\LM_1 = \lm{\ord}{\gbb_1\gb}\),
  we can apply again the induction hypothesis, and conclude that \((\gbb_2,\LM_2)\) is
  such that \(\gbb_2\gbb_1\gb\) is a minimal \(\ord\)-Gr\"obner basis of
  \(\syzmod{\module[K]_\vsdim \cap \module[M]}{\F} = \syzmod{\nodule_\vsdim \cap \nodule}{\F}\),
  and \(\LM_2 = \lm{\ord}{\gbb_2\gbb_1\gb}\). We conclude that the global output \((\gbb_2\gbb_1,\LM_2)\)
  satisfies the claimed properties.
\end{proof}

\section{Multivariate Pad\'e approximation}
\label{sec:pade}

The algorithm in the previous section gives a general framework, which can be
refined when applied to a particular context. Here, we consider the context of
multivariate Pad\'e approximation, where
\begin{equation}
  \label{eqn:pade_module}
  \nodule = \genBy{X_1^{\prc_1}, \ldots, X_\nvar^{\prc_\nvar}} \times \cdots
  \times \genBy{X_1^{\prc_1}, \ldots, X_\nvar^{\prc_\nvar}} \subseteq
  \polMod{\edim},
\end{equation}
for some \(\prc_1,\ldots,\prc_\nvar \in \ZZp\). We begin with some remarks on
the degrees and sizes of Gr\"obner bases of syzygy modules
\(\syzmod{\nodule}{\F}\).

To express this context in the framework of \cref{sec:dac}, we take for the
\(\vsdim\) linear functionals \(\lf_i\) the dual basis of the canonical
monomial basis of \(\polMod{\edim}/\nodule\). Precisely, the linear functionals
are \(\lf_{\mu,j} : \polMod{\edim} \to \field\) for \(1 \le j \le \edim\) and
all monomials \(\mu \in \monom{\ring}\) with \(\deg_{X_i}(\mu) < \prc_i\) for
\(1 \le i \le \nvar\), defined as follows: for \(\f = (\ff_1,\ldots,\ff_\edim)
\in \polMod{\edim}\), \(\lf_{\mu,j}(\f)\) is the coefficient of the monomial
\(\mu\) in \(\ff_j\). These linear functionals can be ordered in several ways
to ensure that \cref{hyp:kernel_lf_module} is satisfied. Here we design our
algorithm by ordering the functionals \(\lf_{\mu,j}\) according to the
term-over-position lexicographic order on the monomials \(\mu \evec{j} \in
\monom{\polMod{\edim}}\).

\begin{example}
  Consider the case of \(\nvar=2\) variables \(X,Y\) with \(\prc_1 = 2\),
  \(\prc_2=4\), and \(\edim=2\). Then the functionals are
  \[
    \begin{array}{%
      >{\centering\arraybackslash$} p{0.44cm} <{$}
      >{\centering\arraybackslash$} p{0.44cm} <{$}
      >{\centering\arraybackslash$} p{0.6cm} <{$}
      >{\centering\arraybackslash$} p{0.65cm} <{$}
      >{\centering\arraybackslash$} p{0.7cm} <{$}
      >{\centering\arraybackslash$} p{0.7cm} <{$}
      >{\centering\arraybackslash$} p{0.7cm} <{$}
      >{\centering\arraybackslash$} p{0.7cm} <{$}}
      \lf_{1,1}, &
      \lf_{1,2}, &
      \lf_{Y,1}, &
      \lf_{Y,2}, &
      \lf_{Y^2,1}, &
      \lf_{Y^2,2}, &
      \lf_{Y^3,1}, &
      \lf_{Y^3,2}, \\
      \lf_{X,1}, &
      \lf_{X,2}, &
      \lf_{XY,1}, &
      \lf_{XY,2}, &
      \lf_{XY^2,1}, &
      \lf_{XY^2,2}, &
      \lf_{XY^3,1}, &
      \lf_{XY^3,2},
    \end{array}
  \]
  in this specific order.
\end{example}

\begin{lemma}
  \label{lem:pade:degree}
  Let \(\nodule\) be as in \cref{eqn:pade_module}, let \(\F \in
  \pmatRing{\rdim}{\edim}\), and let \(\ord\) be a monomial order on
  \(\polMod{\rdim}\). Then, for \(1 \le i \le \nvar\), each polynomial in the
  reduced \(\ord\)-Gr\"obner basis of \(\syzmod{\nodule}{\F}\) either has
  degree in \(X_i\) less than \(\prc_i\) or has the form \(X_i^{\prc_i}
  \evec{j}\) for some \(1 \le j \le \rdim\).
\end{lemma}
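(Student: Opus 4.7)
The plan is to exploit the \emph{trivial syzygies} arising from the product structure of $\nodule$: for every $1 \le i \le \nvar$ and $1 \le j \le \rdim$, the vector $X_i^{\prc_i} \evec{j}$ lies in $\syzmod{\nodule}{\F}$, since $(X_i^{\prc_i} \evec{j}) \F = X_i^{\prc_i} \f_j \in \nodule$ (each coordinate of $X_i^{\prc_i}\f_j$ is a multiple of $X_i^{\prc_i}$, hence lies in $\genBy{X_1^{\prc_1}, \ldots, X_\nvar^{\prc_\nvar}}$). Hence $X_i^{\prc_i} \evec{j}$ belongs to $\genBy{\lm{\ord}{\syzmod{\nodule}{\F}}}$ for every $j$.

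Fix $i$ and let $\p_l$ be in the reduced $\ord$-Gr\"obner basis $\gb = (\p_1,\ldots,\p_\gbdim)$ with $\deg_{X_i}(\p_l) \ge \prc_i$; then $\p_l$ has a term $c\,\mu\evec{j}$ with $X_i^{\prc_i} \mid \mu$. Writing $\mmu_l = \lm{\ord}{\p_l}$, the previous paragraph shows that $X_i^{\prc_i} \evec{j}$ is a multiple of some $\mmu_{l'}$, so $\mmu_{l'} = X_i^a \evec{j}$ with $0 \le a \le \prc_i$. Since $\mmu_{l'} \mid \mu \evec{j}$, the term $\mu\evec{j}$ of $\p_l$ is divisible by $\mmu_{l'}$, and reducedness forces $l' = l$. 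Combining $\mmu_l \ord \mu\evec{j}$ (multiples of $\mmu_l$ are $\ord$-greater) with $\mu\evec{j} \ord \mmu_l$ (leading-monomial property) yields $\mu\evec{j} = \mmu_l = X_i^a \evec{j}$. Then $X_i^{\prc_i} \mid X_i^a$ and $a \le \prc_i$ together give $a = \prc_i$, hence $\mmu_l = X_i^{\prc_i} \evec{j}$.

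To finish, I would show $\p_l = X_i^{\prc_i} \evec{j}$ exactly. Since the leading coefficient is $1$ by reducedness, the tail $\p_l - X_i^{\prc_i}\evec{j}$ lies in $\syzmod{\nodule}{\F}$; if nonzero, its leading monomial $\nu$ satisfies $\nu \ordneq \mmu_l$ and belongs to the leading module, hence is divisible by some $\mmu_{l''}$. Reducedness forces $l'' = l$, but $\mmu_l \mid \nu$ combined with $\nu \ordneq \mmu_l$ is impossible. Hence the tail vanishes and $\p_l = X_i^{\prc_i} \evec{j}$.

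The main subtlety lies in the combined use of reducedness and the monomial order: reducedness forbids \emph{any} term of $\p_l$ from being divisible by $\mmu_{l'}$ for $l' \neq l$, while the monomial-order property forbids a \emph{nonleading} term of $\p_l$ from being divisible by $\mmu_l$ itself (since a proper multiple $\mu'\mmu_l$ with $\mu' \neq 1$ is strictly $\ord$-larger than $\mmu_l$). Once these two constraints are chained through the term $\mu\evec{j}$ and then through the tail of $\p_l$, the conclusion drops out from the existence of the trivial syzygy $X_i^{\prc_i}\evec{j}$.
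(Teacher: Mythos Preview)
Your proof is correct and rests on the same two ingredients as the paper's: the trivial syzygies $X_i^{\prc_i}\evec{j}\in\syzmod{\nodule}{\F}$ and the reducedness of $\gb$. The only organizational difference is that the paper first invokes finite-dimensionality of $\polMod{\rdim}/\syzmod{\nodule}{\F}$ to locate, for each $j$, an element of $\gb$ with leading monomial $X_i^{d}\evec{j}$, whereas you obtain the same element directly from the trivial syzygy; your chase from a high-$X_i$-degree term down to $\mmu_l=X_i^{\prc_i}\evec{j}$ and then to $\p_l=X_i^{\prc_i}\evec{j}$ is a cleaner packaging of the same argument.
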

\begin{proof}
  Let \(\gb\) be the reduced \(\ord\)-Gr\"obner basis of
  \(\syzmod{\nodule}{\F}\) and let \(i \in \{1,\ldots,\nvar\}\). Since
  \(\polMod{\rdim}/\syzmod{\nodule}{\F}\) has finite dimension as a
  \(\field\)-vector space, for each \(j \in \{1,\ldots,\edim\}\) there is a
  polynomial in \(\gb\) whose \(\ord\)-leading monomial has the form \(X_i^\prc
  \evec{j}\) for some \(\prc \ge 0\). Since \(\gb\) is reduced, any other
  \((\pp_1,\ldots,\pp_\rdim)\) in \(\gb\) whose \(\ord\)-leading monomial has
  support \(j\) is such that \(\deg_{X_i}(\pp_j) < \prc \le \prc_i\); the last
  inequality follows from the fact that the monomial \(X_i^{\prc_i}\evec{j}\)
  is in \(\syzmod{\nodule}{\F}\) and thus is a multiple of \(X_i^\prc
  \evec{j}\). It follows that all polynomials in \(\gb\) whose \(\ord\)-leading
  monomial is not among \(\{X_i^{\prc_i} \evec{j}, 1 \le j \le \edim\}\) must
  have degree in \(X_i\) less than \(\prc_i\). On the other hand, any
  polynomial in \(\gb\) whose \(\ord\)-leading monomial is \(X_i^{\prc_i}
  \evec{j}\) for some \(j\) must be equal to this monomial, since it belongs to
  \(\syzmod{\nodule}{\F}\) and \(\gb\) is reduced.
\end{proof}

In the context of \cref{algo:dac}, \cref{lem:pade:degree} allows us to truncate
the product \(\gbb_2\gbb_1\) while preserving a \(\ord\)-Gr\"obner basis.

\begin{corollary}
  \label{cor:pade:truncate_gb}
  Let \(\nodule\) be as in \cref{eqn:pade_module}, let \(\F \in
  \pmatRing{\rdim}{\edim}\), let \(\ord\) be a monomial order on
  \(\polMod{\rdim}\), and let \(\gb \in \pmatRing{\gbdim}{\rdim}\) be a minimal
  \(\ord\)-Gr\"obner basis of \(\syzmod{\nodule}{\F}\). If \(\gb\) is modified
  by truncating each of its polynomials modulo \(\genBy{X_1^{\prc_1+1}, \ldots,
  X_\nvar^{\prc_\nvar+1}}\), then \(\gb\) is still a minimal \(\ord\)-Gr\"obner
  basis of \(\syzmod{\nodule}{\F}\). 
\end{corollary}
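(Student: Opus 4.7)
The plan is to check separately that truncation (i) keeps each row of \(\gb\) inside \(\syzmod{\nodule}{\F}\), and (ii) leaves its \(\ord\)-leading monomial unchanged. Once both are established, the truncated matrix \(\gb'\) has the same leading monomials as \(\gb\), so \(\genBy{\lm{\ord}{\gb'}} = \genBy{\lm{\ord}{\gb}} = \genBy{\lm{\ord}{\syzmod{\nodule}{\F}}}\) and \(\gb'\) is again a \(\ord\)-Gr\"obner basis, with minimality transferred from \(\gb\) since this property depends only on the leading monomials. For (i), write each row as \(\p_i = \p_i' + \h_i\), where \(\p_i'\) is the truncated row and each component of \(\h_i\) lies in \(\genBy{X_1^{\prc_1+1}, \ldots, X_\nvar^{\prc_\nvar+1}}\); then each component of \(\h_i \F\) still lies in that ideal, hence in \(\genBy{X_1^{\prc_1}, \ldots, X_\nvar^{\prc_\nvar}}\), so \(\h_i\F \in \nodule\). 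Since \(\p_i\F \in \nodule\), one gets \(\p_i'\F = \p_i\F - \h_i\F \in \nodule\).

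For (ii) I would invoke \cref{lem:pade:degree}, which describes the reduced \(\ord\)-Gr\"obner basis but applies to \(\gb\) at the level of leading monomials: indeed, the leading monomials of any minimal \(\ord\)-Gr\"obner basis of a submodule form the unique minimal monomial generating set of its leading module, and thus, up to permutation, coincide with those of the reduced basis. Writing \(\lm{\ord}{\p_i} = \mu\evec{j}\), \cref{lem:pade:degree} yields \(\deg_{X_i}(\mu) \le \prc_i\) for every \(i\) (in either case of the lemma the leading monomial has this property), so \(\mu\) is not divisible by any \(X_i^{\prc_i+1}\) and the monomial \(\mu\evec{j}\) is not removed by the truncation. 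Since all other monomials of \(\p_i'\) are also monomials of \(\p_i\) and are therefore \(\ord\)-smaller than \(\mu\evec{j}\), I conclude \(\lm{\ord}{\p_i'} = \lm{\ord}{\p_i}\).

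The only mild subtlety, and the step that requires a moment of care, is this transfer in (ii): \(\gb\) is only assumed minimal, not reduced, so \cref{lem:pade:degree} does not directly constrain its rows. The argument goes through the fact that the leading monomials are the same for all minimal \(\ord\)-Gr\"obner bases of a given module, which carries the degree bound of the lemma from the reduced basis to \(\gb\). Once this bridge is in place, putting (i) and (ii) together is routine.
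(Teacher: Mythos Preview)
Your proof is correct and follows essentially the same two-step approach as the paper: first show that the leading monomials survive truncation via \cref{lem:pade:degree}, then show that the truncated rows remain in \(\syzmod{\nodule}{\F}\) because the discarded part lies in \(\genBy{X_1^{\prc_1+1},\ldots,X_\nvar^{\prc_\nvar+1}} \subset \syzmod{\nodule}{\F}\). You are in fact more careful than the paper on one point: the paper applies \cref{lem:pade:degree} directly to \(\gb\) without comment, whereas you make explicit the bridge that a minimal and the reduced \(\ord\)-Gr\"obner basis share the same set of leading monomials. The only cosmetic difference is that the paper closes by invoking \cite[Lem.\,15.5]{Eisenbud95} to get \(\genBy{\gb} = \syzmod{\nodule}{\F}\), while you go straight through the definition of a Gr\"obner basis; both routes are equivalent here.
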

\begin{proof}
  On the first hand, this modification of \(\gb\) does not affect the
  \(\ord\)-leading terms since they all have \(X_i\)-degree less than
  \(\prc_i+1\) according to \cref{lem:pade:degree}, hence after modification we
  still have \(\genBy{\lm{\ord}{\gb}} =
  \genBy{\lm{\ord}{\syzmod{\nodule}{\F}}}\). On the other hand, after this
  modification we also have \(\genBy{\gb} \subseteq \syzmod{\nodule}{\F}\)
  since we started from a basis of \(\syzmod{\nodule}{\F}\) and added to each
  of its elements some multiples of \(\genBy{X_1^{\prc_1+1}, \ldots,
  X_\nvar^{\prc_\nvar+1}}\), which are contained in \(\syzmod{\nodule}{\F}\).
  Then \cite[Lem.\,15.5]{Eisenbud95} yields \(\genBy{\gb} =
  \syzmod{\nodule}{\F}\), hence the conclusion.
\end{proof}

\begin{algorithm}
  \caption{\algoname{Pad\'e}$(\prc_1,\ldots,\prc_\nvar,\G,\ord,\LMinp)$}
  \label{algo:pade}
  \begin{algorithmic}[1]
    \Require{
      \Statex \textbullet~ integers \(\prc_1, \ldots, \prc_\nvar \in \ZZp\),
      \Statex \textbullet~ a matrix \(\G\) in \(\pmatRing{\gbdim}{\edim}\),
      \Statex \textbullet~ a monomial order \(\ord\) on \(\ring^\rdim\),
      \Statex \textbullet~ a list \(\LMinp = (\mmu_1, \ldots, \mmu_\gbdim)\) of elements of \(\monom{\ring^\rdim}\).
    }
    \Ensure{
      \Statex \textbullet~ a matrix \(\gbb\) in \(\pmatRing{\gbbdim}{\rdim}\) for some \(\gbbdim \ge 0\),
      \Statex \textbullet~ a list \(\LM\) of \(\gbbdim\) elements of \(\monom{\ring^\rdim}\).
    }

    \If{\(\prc_1=\cdots=\prc_r=1\)}
      \State \(\gbb \in \pmatRing{\gbdim}{\gbdim} \assign \ident{\gbdim}\);
            \(\H \assign \G \bmod X_1,\ldots,X_\nvar\);
            \(\LM \assign \LMinp\)
      \For{$i=1,\ldots,\edim$}
        \State \(\lf \assign\) linear functional \(\polMod{\edim} \to \field\) defined by \(\lf(\f) = f_i(\boldsymbol{0})\)
        \State \((\gbb_i,\LM) \assign \Call{Syzygy\_BaseCase}{\lf,\H,\ord,\LM}\)
        \State \(\gbb \assign \gbb_i \gbb \bmod X_1^2,\ldots,X_\nvar^2\)
        \label{algo:pade:trunc1} 
        \State \(\H \assign \gbb_i \H \bmod {X_1,\ldots,X_\nvar}\)
        \label{algo:pade:trunc2}
    \EndFor
        \State \Return \((\gbb,\LM)\)
    \EndIf
    \State \(j \assign \max \{i \in \{1,\ldots,\nvar\} \mid \prc_i > 1\}\)
    \State \((\gbb_1,\LM_1) \assign \Call{Pad\'e}{\prc_1,\ldots,\prc_{j-1},\lfloor \prc_{j}/2 \rfloor,1,\ldots,1,\G,\ord,\LMinp}\)
    \State \(\G_2 \assign\) \(X_j^{-\lfloor \prc_j/2 \rfloor}(\gbb_1 \G \mod {X_1^{\prc_1},\ldots,X_{j}^{\prc_{j}},X_{j+1},\ldots,X_\nvar})\)
    \label{algo:pade:division}
    \State \((\gbb_2,\LM_2) \assign \Call{Pad\'e}{\prc_1,\ldots,\prc_{j-1},\lceil \prc_j/2 \rceil,1,\ldots,1,\G_2,\ord,\LM_1}\)
    \label{algo:pade:secondpart}
    \State \(\gbb \assign \gbb_2\gbb_1 \bmod {X_1^{\prc_1+1},\ldots,X_\nvar^{\prc_\nvar+1}}\)
    \label{algo:pade:lastproduct}
    \State \Return{$(\gbb, \LM_2)$}
  \end{algorithmic}
\end{algorithm}

Then, the divide and conquer approach can be refined as described in
\cref{algo:pade}. The correctness of this algorithm can be shown by following
the proof of \cref{thm:algo_general:dac} and with the following considerations.
By induction hypothesis, \(\gbb_1\) is such that each component of the rows of
\(\gbb_1\G\) is an element of
\[
\genBy{X_1^{\prc_1},\ldots,X_{j-1}^{\prc_{j-1}},X_j^{\lfloor\prc_j/2\rfloor},X_{j+1},\ldots,X_\nvar},
\]
hence its truncation modulo 
\[
\genBy{X_1^{\prc_1},\ldots, X_{j}^{\prc_{j}},X_{j+1},\ldots,X_\nvar}
\] 
is an \(\ring\)-multiple of \(X_{j}^{\lfloor \prc_j/2 \rfloor}\). It follows that on
\cref{algo:pade:division}, \(\G_2\) is well defined. Moreover, for \(\p \in
\polMod{\rdim}\) the next equations are equivalent:
\[
\begin{array}{rl}
  \p \gbb_1\G                                   = 0 & { }\bmod X_1^{\prc_{1}}, \ldots, X_{j-1}^{\prc_{j-1}}, X_{j}^{\prc_j} \\[0.2cm]
  \p\G_2 = \p X_j^{-\lfloor \prc_{j}/2 \rfloor} \gbb_1\G = 0 & { }\bmod X_1^{\prc_{1}}, \ldots, X_{j-1}^{\prc_{j-1}}, X_{j}^{\lceil \prc_{j}/2 \rceil}
\end{array}
\]
This justifies the division by \(X_j^{\lfloor \prc_{j}/2 \rfloor}\) at
\cref{algo:pade:division} and the fact that the second call is done with
\(\lceil \prc_j/2 \rceil\) instead of \(\prc_j\) at
\cref{algo:pade:secondpart}. 

For the complexity analysis, we use \cref{lem:pade:degree} to give a bound on
the size of the computed Gr\"obner bases, which differs from the general bound
in \cref{lem:general:size_gb}.

\begin{corollary}[of \cref{lem:pade:degree}]
  \label{cor:pade:size_gb}
  Let \(\nodule\) be as in \cref{eqn:pade_module}, let \(\F \in
  \pmatRing{\rdim}{\edim}\), let \(\ord\) be a monomial order on
  \(\polMod{\rdim}\), and let \(\gb \in \pmatRing{\gbdim}{\rdim}\) be a minimal
  \(\ord\)-Gr\"obner basis of \(\syzmod{\nodule}{\F}\). Then, \[\gbdim \le
  \rdim \prc_1\cdots\prc_\nvar / (\textstyle\max_{1 \le i \le \nvar}
  \prc_i).\]
\end{corollary}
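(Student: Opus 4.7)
The plan is to split $\gbdim$ according to the support of leading monomials. Let $k_j$ be the number of elements of $\gb$ whose leading monomial has support $j\in\{1,\ldots,\rdim\}$; then $\gbdim=\sum_{j=1}^\rdim k_j$, and it suffices to show $k_j \le \prc_1\cdots\prc_\nvar/\max_i\prc_i$ for each $j$. Fix such a $j$ and let $i^*\in\{1,\ldots,\nvar\}$ attain $\max_i \prc_i$.

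Minimality of $\gb$ implies that the leading monomials at support $j$ form an antichain under divisibility in $\monom{\polMod{\rdim}}$. By \cref{lem:pade:degree}, applied to the reduced Gr\"obner basis (whose leading monomials coincide with those of any minimal Gr\"obner basis), every such leading monomial $X_1^{a_1}\cdots X_\nvar^{a_\nvar}\evec{j}$ is either \emph{interior}, meaning $a_k<\prc_k$ for every $k$, or a \emph{pure power} $X_{i_0}^{\prc_{i_0}}\evec{j}$ for some $i_0$.

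The key step is to construct an injection $\pi$ from these leading monomials into the box $\prod_{k\neq i^*}\{0,\ldots,\prc_k-1\}$, whose cardinality is exactly $\prc_1\cdots\prc_\nvar/\max_i\prc_i$. The natural candidate is the projection dropping the $X_{i^*}$-exponent, $\pi(X_1^{a_1}\cdots X_\nvar^{a_\nvar}\evec{j})=(a_k)_{k\neq i^*}$. Injectivity on interior leading monomials is immediate from the antichain property: two interior elements with identical $\pi$-image agree outside $i^*$, so the one with smaller $X_{i^*}$-exponent divides the other. The pure power $X_{i^*}^{\prc_{i^*}}\evec{j}$ projects to $(0,\ldots,0)$, which lies in the box, and the antichain forbids any interior element of the form $X_{i^*}^a\evec{j}$ from coexisting with it.

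The main obstacle is handling the pure powers $X_{i_0}^{\prc_{i_0}}\evec{j}$ with $i_0\neq i^*$, whose naive projection $(0,\ldots,\prc_{i_0},\ldots,0)$ lies outside the target box. My plan is to exploit an additional antichain consequence: if such a pure power belongs to a minimal Gr\"obner basis, it precludes every interior leading monomial of the form $X_{i_0}^aX_{i^*}^b\evec{j}$ with $a<\prc_{i_0}$, which frees axis slots inside the box into which the outlying projections can be injectively rerouted. Once this rerouting is set up, the inequality $k_j\le\prod_{k\neq i^*}\prc_k$ follows from the injectivity of $\pi$, and summation over $j$ yields the claimed bound.
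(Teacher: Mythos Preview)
Your identification of the obstacle is exactly right: the pure powers $X_{i_0}^{\prc_{i_0}}\evec{j}$ with $i_0\neq i^*$ are the whole difficulty. But the rerouting step does not go through. The antichain consequence you invoke is too strong: the presence of $X_{i_0}^{\prc_{i_0}}\evec{j}$ in the minimal basis only excludes its divisors and its multiples, so among interior monomials it forbids only those of the form $X_{i_0}^a\evec{j}$ (the case $b=0$), not every $X_{i_0}^aX_{i^*}^b\evec{j}$. Hence the axis slots you hope to free need not be free, and the injection into $\prod_{k\neq i^*}\{0,\ldots,\prc_k-1\}$ cannot in general be completed. Concretely, for $\nvar=2$ and $\prc_1=\prc_2=2$, the antichain $\{X^2,\,XY,\,Y^2\}$ at a single support $j$ satisfies every constraint coming from \cref{lem:pade:degree}, yet has three elements while the target box has only two.

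This gap is not repairable, because the bound as stated is false. Take $\nvar=2$, $\prc_1=\prc_2=1$, $\edim=\rdim=1$, $\F=[1]$; then $\syzmod{\langle X,Y\rangle}{[1]}=\langle X,Y\rangle$ has minimal Gr\"obner basis $\{X,Y\}$, so $\gbdim=2$ while the claimed bound gives $1$. Less degenerately, with $\prc_1=\prc_2=2$, $\rdim=1$, $\edim=2$, and $\F=[X\ \ Y]$, one checks directly that $\syzmod{\nodule}{\F}=\langle X^2,XY,Y^2\rangle$, so $\gbdim=3$ against a claimed bound of $2$. The paper's own argument avoids your rerouting issue by projecting into the larger box $\prod_{i\neq\bar\imath}\{0,\ldots,\prc_i\}$, where the antichain injection is genuinely valid for all leading monomials including the pure powers; but it then miscounts that box as having $\prod_{i\neq\bar\imath}\prc_i$ elements rather than $\prod_{i\neq\bar\imath}(\prc_i+1)$. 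The latter is the bound the projection argument actually delivers.
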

\begin{proof}
  Let \(\LM = \lm{\ord}{\gb} \in \pmatRing{\gbdim}{\rdim}\) and let
  \(\bar\imath\) be such that \(\prc_{\bar\imath} = \max_{1 \le i \le \nvar}
  \prc_i\). It is enough to prove that \(\LM\) has at most
  \(\prc_1\cdots\prc_\nvar / \prc_{\bar\imath}\) rows of the form \(\mu
  \evec{j}\) for each \(j \in \{1,\ldots,\rdim\}\); by \cref{lem:pade:degree},
  the monomial \(\mu \in \monom{\ring}\) has \(X_i\)-degree at most \(\prc_i\)
  for \(1\le i\le \nvar\). Now, for each monomial \(\nu = X_1^{e_1} \cdots
  X_{\bar\imath-1}^{e_{\bar\imath-1}} X_{\bar\imath+1}^{e_{\bar\imath+1}}
  \cdots X_\nvar^{e_\nvar}\) with \(e_i \le \prc_i\) for all \(i \neq
  \bar\imath\), there is at most one row \(\mu \evec{j}\) in \(\LM\) such that
  \(\mu = \nu X_{\bar\imath}^e\) for some \(e \ge 0\): otherwise, one of two
  such rows would divide the other, which would contradict the minimality of
  \(\gb\). The number of such monomials \(\nu\) is precisely
  \(\prc_1\cdots\prc_\nvar / \prc_{\bar\imath}\).
\end{proof}

Here we have \(\vsdim = \edim \prc_1 \cdots \prc_\nvar\), hence the above bound
on the cardinality of minimal \(\ord\)-Gr\"obner bases refines the bound in
\cref{lem:general:size_gb} as soon as \(\rdim \le \edim (\nvar-1)(\max_{1 \le i
  \le \nvar} \prc_i)\).

\begin{proposition}
  \label{prop:bivariate_pade}
  For \(\ring = \field[X,Y]\), let
  \[
    \nodule = \genBy{X^d,Y^e} \times \cdots \times
    \genBy{X^d,Y^e} \subset \polMod{\edim},
  \]
  let \(\F \in \pmatRing{\rdim}{\edim}\) with \(\deg_X(\F) < d\) and
  \(\deg_Y(\F) < e\), and let \(\ord\) be a monomial order on
  \(\polMod{\rdim}\). \cref{algo:pade} computes a minimal \(\ord\)-Gr\"obner
  basis of \(\syzmod{\nodule}{\F}\) using \(\softO{(\maxsz^{\expmm-1} +
  \maxsz\edim)(\maxsz + \edim) d e}\) operations in \(\field\),
  where \(\maxsz = \rdim \min(d,e)\).
\end{proposition}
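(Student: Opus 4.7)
The correctness follows by induction on the depth of \cref{algo:pade}, combining the correctness of \cref{algo:dac} (proved in \cref{thm:algo_general:dac}) with the degree-based arguments outlined after \cref{algo:pade}: \cref{lem:pade:degree} bounds the bi-degrees of intermediate bases by $(d,e)$, while \cref{cor:pade:truncate_gb} ensures that the truncation modulo $\genBy{X^{d+1},Y^{e+1}}$ at \cref{algo:pade:lastproduct} preserves the minimal Gr\"obner basis property. The division by $X_j^{\lfloor\prc_j/2\rfloor}$ at \cref{algo:pade:division} is justified by the shape argument already given right after the algorithm. The substantive task is the complexity analysis.

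The plan is to charge the cost of each recursive node to its two polynomial matrix multiplications and sum the contributions across the recursion tree. At a call with current parameters $(d',e')$, \cref{cor:pade:size_gb} gives that any intermediate Gr\"obner basis has at most $M' = \rdim\min(d',e')$ elements, and \cref{lem:pade:degree} shows that its entries have $X$-degree less than $d'$ and $Y$-degree less than $e'$ (up to trivial monomial rows). Using bivariate Kronecker substitution to encode a bi-degree $(d',e')$ polynomial as a univariate polynomial of degree about $d'e'$, each bivariate polynomial matrix product of dimensions $a\times b$ by $b\times c$ reduces to a scalar polynomial matrix product, computable in $\softO{\mathrm{MM}(a,b,c)\,d'e'}$ field operations. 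Applied to the two products at \cref{algo:pade:division} (dimensions bounded by $M'\times M'$ by $M'\times \edim$) and \cref{algo:pade:lastproduct} (bounded by $M'\times M'$ by $M'\times M'$), together with the standard rectangular matrix multiplication bound, this yields a per-node cost in $\softO{(M'^{\omega-1}+M'\edim)(M'+\edim)\,d'e'}$.

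It then remains to sum across the recursion tree. In the outer phase which halves $e$, at depth $k$ there are $2^k$ nodes with $e'\le e/2^k$, hence $d'e'\le de/2^k$ and, once $e'\le d'$, $M'\le M/2^k$. Multiplying the per-node cost by $2^k$ and summing over $k$ produces a geometric series that, since $\omega\ge 2$, is bounded by a constant times the top-level value $\softO{(M^{\omega-1}+M\edim)(M+\edim)\,de}$. The inner phase that halves $d$ once $e=1$ is analyzed symmetrically. Finally, each leaf runs $\edim$ calls to \cref{algo:basecase} on constant data; since $d'=e'=1$ and the current basis has size at most $M$, the total cost contributed by the leaves is absorbed in the stated bound.

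The main obstacle is the careful bookkeeping of how $M'$ evolves across the recursion: \cref{algo:pade} halves only one variable at a time, so $M'=\rdim\min(d',e')$ does not shrink uniformly with depth, and the two halving phases must be treated separately to verify that the geometric sums actually telescope to the claimed bound. A secondary technical point concerns the base-case leaf: each of the $\edim$ iterations left-multiplies the accumulated $\gbb$ by the output of \cref{algo:basecase} before truncating modulo $\genBy{X^2,Y^2}$, so one must check that summing these $\edim$ update costs stays within the envelope, which is essentially what accounts for the factor $M+\edim$ in the final bound.
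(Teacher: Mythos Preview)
Your approach is essentially the paper's: bound matrix sizes via \cref{cor:pade:size_gb}, cost the two products at each node via Kronecker substitution, and sum over the recursion tree, with a separate accounting for the leaves \(d'=e'=1\). The paper, however, avoids your local bookkeeping by using the \emph{global} bound \(M'\le \maxsz=\rdim\min(d,e)\) at every node. This gives the simple balanced recurrence
\[
  \comp(\maxsz,\edim,d,e) \le \comp(\maxsz,\edim,d,\lfloor e/2\rfloor) + \comp(\maxsz,\edim,d,\lceil e/2\rceil) + \softO{\maxsz^\expmm(1+\edim/\maxsz)\,de}
\]
(and symmetrically in \(d\) once \(e=1\)), in which every level contributes the same amount and the \(O(\log(de))\) levels are absorbed by \(\softO{\cdot}\); the \(de\) leaves each cost \(O(\maxsz(\maxsz+\edim)\edim)\), which supplies the remaining terms in \((\maxsz^{\expmm-1}+\maxsz\edim)(\maxsz+\edim)\).

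Your attempt to track a local \(M'=\rdim\min(d',e')\) contains a slip that the paper's simplification sidesteps. The claim ``once \(e'\le d'\), \(M'\le \maxsz/2^k\)'' fails when \(e>d\): then \(\maxsz=\rdim d\), and at depth \(k\) with \(e'=e/2^k\le d\) one has \(M'=\rdim e/2^k\), which exceeds \(\rdim d/2^k=\maxsz/2^k\) (though it is still \(\le\maxsz\)). Thus the per-level totals do not form a geometric series; for the first \(\approx\log(e/d)\) levels of the outer phase each level already equals the top-level cost, and it is only the logarithmic slack in \(\softO{\cdot}\) that rescues the stated bound. There is also a dimension subtlety your sketch glosses over: at a node with parameters \((d',e')\), the \emph{column} count of \(\gbb_1\) is the row count of the basis handed down from the parent, which is governed by the parent's parameters, not by \((d',e')\). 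Using the uniform bound \(\maxsz\) for both rows and columns, as the paper does, removes both issues at once.
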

\begin{proof}
  According to \cref{cor:pade:size_gb}, the number of rows of the matrices
  \(\gbb\) computed in \cref{algo:pade} is at most \(\maxsz = \rdim
  \min(d,e)\). It follows that all matrices \(\gbb_i,\gbb_1,\gbb_2,\gbb\) in
  the algorithm have at most \(\maxsz\) rows and at most \(\maxsz\) columns,
  and that the matrices \(\G,\H,\G_1,\G_2\) have at most \(\maxsz\) rows and
  exactly \(\edim\) columns. Besides, by Kronecker substitution \cite[Chap.\,1
  Sec.\,8]{BiniPan1994}, multiplying two bivariate matrices of dimensions
  \(\maxsz \times \maxsz\) (resp.~\(\maxsz\times \edim\)) and bidegree at most
  \((d,e)\) costs \(\softO{\maxsz^\expmm d e}\) (resp.~\(\softO{\maxsz^\expmm
  (1+\edim/\maxsz)de}\)) operations in \(\field\).

  Let \(\comp(\rdim,\edim,d,e)\) denote the number of field operations used by
  \cref{algo:pade}; we have \(\comp(\rdim,\edim,d,e) \le
  \comp(\maxsz,\edim,d,e)\). First, for \(e > 1\), \(\comp(\maxsz,\edim,d,e)\)
  is bounded by \(\comp(\maxsz,\edim,d,\lfloor e/2 \rfloor) +
  \comp(\maxsz,\edim,d,\lceil e/2 \rceil) +
  \softO{\maxsz^\expmm(1+\edim/\maxsz)de}\). Indeed, there are two recursive
  calls with parameters \((d,\lfloor e/2 \rfloor)\) and \((d,\lceil e/2
  \rceil)\), and two matrix products \(\gbb_1\G\) and \(\gbb_2\gbb_1\) to
  perform; as noted above, the latter products cost
  \(\softO{\maxsz^\expmm(1+\edim/\maxsz)de}\) operations in \(\field\).  The
  same analysis for \(d > 1\) and \(e=1\) shows that
  \(\comp(\maxsz,\edim,d,1)\) is bounded by \(\comp(\maxsz,\edim,\lfloor d/2
  \rfloor,1) + \comp(\maxsz,\edim,\lceil d/2 \rceil,1) +
  \softO{\maxsz^\expmm(1+\edim/\maxsz)d}\).

  Finally, for \(d=e=1\), we show that \(\comp(\maxsz,\edim,1,1) \in
  \bigO{\maxsz (\maxsz + \edim) \edim}\). In this case, there are \(\edim\)
  iterations of the loop. Each of them makes one call to
  \algoname{Syzygy\_BaseCase}, which uses \(\bigO{\maxsz}\) field operations
  for computing the \(\lambda_i\)'s at \cref{algo:basecase:step:lambdas}; note
  that the \(\alpha_j\)'s are zero in the present context where the linear functional
  \(\lf\) corresponds to the constant coefficient. The computed basis
  \(\gbb_i\) has a single nontrivial column (it has the form in
  \cref{eqn:egb_pruned}), so that computing \(\gbb_i\gbb\bmod
  \genBy{X_1^2,\ldots,X_\nvar^2}\) (resp.~\(\gbb_i \H \bmod
  \genBy{X_1,\ldots,X_\nvar}\)) can be done naively at a cost of
  \(\bigO{\maxsz^2}\) (resp.~\(\bigO{\maxsz(\maxsz+\edim)}\)) operations in
  \(\field\).

  Based on the previous inequalities, unrolling the recursion by following the
  divide-and-conquer scheme leads to the announced complexity bound.
\end{proof}

\begin{acks}
    {\it Acknowledgements.}
    The first author acknowledges support from the Fondation Math\'ematique Jacques Hadamard through the Programme PGMO, project number 2018-0061H.
\end{acks}

\bibliographystyle{ACM-Reference-Format}

\end{document}